\DeclareMathOperator*{\argmin}{arg\,min}
\DeclareMathOperator*{\argmax}{arg\,max}
\newcommand{\bp}{\begin{proof} \small }
\newcommand{\ep}{\end{proof} \normalsize}
\newcommand{\epx}{\end{proof} \small}
\newcommand{\bpa}{\begin{proofappx} \footnotesize }
\newcommand{\epa}{\end{proofappx} \small }
\newtheorem{theorem}{Theorem}
\newtheorem{corollary}{Corollary}
\newtheorem{lemma}{Lemma}
\newtheorem*{theorem*}{Theorem}
\newtheorem*{proposition*}{Proposition}
\newtheorem*{corollary*}{Corollary}
\newtheorem*{lemma*}{Lemma}
\newtheorem*{assumption*}{Assumption}
\newtheorem*{definition*}{Definition}
\newtheorem*{claim*}{Claim}
\newtheorem{remark}{Remark}
\newcommand{\be}{\begin{equation}}
\newcommand{\ee}{\end{equation}}
\newcommand{\bs}{\begin{subequations}}
\newcommand{\es}{\end{subequations}}
\newcommand{\bq}{\begin{eqnarray}}
\newcommand{\eq}{\end{eqnarray}}
\newcommand{\bqn}{\begin{eqnarray*}}
\newcommand{\eqn}{\end{eqnarray*}}
\newcommand{\ba}{\left[ \begin{array}}
\newcommand{\ea}{\\ \end{array} \right]}
\newcommand{\ben}{\begin{enumerate}}
\newcommand{\een}{\end{enumerate}}
\def\d{{\boldsymbol{d}}}
\def\real{{\mathchoice%
{\hbox{\rm\setbox1=\hbox{I}\copy1\kern-.45\wd1 R}}
{\hbox{\rm\setbox1=\hbox{I}\copy1\kern-.45\wd1 R}}
{\hbox{\scriptsize\rm\setbox1=\hbox{I}\copy1\kern-.45\wd1 R}}
{\hbox{\scriptsize\rm\setbox1=\hbox{I}\copy1\kern-.45\wd1 R}}}}
\def\Zint{{\mathchoice{\setbox1=\hbox{\sf Z}\copy1\kern-.75\wd1\box1}
{\setbox1=\hbox{\sf Z}\copy1\kern-.75\wd1\box1}
{\setbox1=\hbox{\scriptsize\sf Z}\copy1\kern-.75\wd1\box1}
{\setbox1=\hbox{\scriptsize\sf Z}\copy1\kern-.75\wd1\box1}}}
\newcommand{\complex}{ \hbox{\rm C\kern-0.45em\rule[.07em]{.02em}{.58em}%
\kern 0.43em}}
\begin{document}
	%
	\title{Distributed Task Replication for Vehicular Edge Computing: Performance Analysis and Learning-based Algorithm}
	
	\author{Yuxuan~Sun,~\IEEEmembership{Student Member,~IEEE,}
		Sheng~Zhou,~\IEEEmembership{Member,~IEEE,} \\
		Zhisheng Niu,~\IEEEmembership{Fellow,~IEEE} 
		\thanks{Y. Sun, S. Zhou and Z. Niu are with the Department of Electronic Engineering, Tsinghua University, Beijing 100084, China. Emails: sunyx15@mails.tsinghua.edu.cn, \{sheng.zhou, niuzhs\}@tsinghua.edu.cn.}  
		\thanks{This work is sponsored in part by the Nature Science Foundation of China (No. 61871254, No. 91638204, No. 61861136003), National Key R\&D Program of China 2018YFB0105005, and Intel Collaborative Research Institute for Intelligent and Automated Connected Vehicles. 
			(Corresponding author: Sheng Zhou.)}
		\thanks{Part of this work has been presented in IEEE GLOBECOM 2018 \cite{Sun2018globecom}.}
	}
	
	\maketitle

	\begin{abstract}
		In a vehicular edge computing (VEC) system, vehicles can share their surplus computation resources to provide cloud computing services. The highly dynamic environment of the vehicular network makes it challenging to guarantee the task offloading delay. To this end, we introduce task replication to the VEC system, where the replicas of a task are offloaded to multiple vehicles at the same time, and the task is completed upon the first response among replicas. First, the impact of the number of task replicas on the offloading delay is characterized, and the optimal number of task replicas is approximated in closed-form. Based on the analytical result, we design a learning-based task replication algorithm (LTRA) with combinatorial multi-armed bandit theory, which works in a distributed manner and can automatically adapt itself to the dynamics of the VEC system. A realistic traffic scenario is used to evaluate the delay performance of the proposed algorithm. Results show that, under our simulation settings, LTRA with an optimized number of task replicas can reduce the average offloading delay by over $30\%$ compared to the benchmark without task replication, and at the same time can improve the task completion ratio from $97\%$ to $99.6\%$.\\
		
	\end{abstract}
	
	\begin{IEEEkeywords}
		Vehicular edge computing, computation task offloading, task replication, online learning, combinatorial multi-armed bandit.
	\end{IEEEkeywords}	
	
	%
	\IEEEpeerreviewmaketitle
	
	\section{Introduction}
	
	Vehicles are becoming \emph{connected} and \emph{intelligent}. The development of communications protocols such as IEEE 802.11p and LTE-V enable vehicle-to-everything (V2X) communications \cite{80211p,3gpp36300,tdlte}. To realize autonomous driving and various on-board infotainment applications, vehicles will be equipped with powerful computation resources, e.g., to handle $10^6$ dhrystone million instructions per second \cite{intel}, as well as a variety of sensors such as cameras and radars. 
	These \emph{moving} communication, computation and sensing resources can be further exploited to enhance conventional multi-access edge computing (MEC) systems \cite{mao2017mobile, mach2017mobile, yu2018survey}, where computation and storage resources are deployed in static infrastructures such as base stations (BSs) at the edge of wireless networks.
	
	Consequently, the concept of \emph{vehicular edge computing} (VEC) (also known as vehicular fog or cloud computing) has been proposed \cite{abdel2015vehicle,bitam2015vanet,hou2016vfc,choo2017sdvc,qian2018collaborative,Sun2019CMag, Abdreev2019dense,cheng20175G}, where road side units (RSUs) and vehicles with surplus computation resources are employed as computing nodes just like the role of edge servers in the MEC system. The computation resources are abstracted via network function virtualization and software defined networking techniques to support various applications.
	Task requesters, including on-board driving systems and mobile devices of passengers and pedestrians, can get computing services from service providers, including vehicles and RSUs, by means of \emph{task offloading}. 
	In this context, vehicles acting as service providers are called \emph{service vehicles} (SeVs), while vehicles whose driving systems or passengers requesting computation task offloading are called \emph{task vehicles} (TaVs).
	Typical use cases in the VEC system include autonomous driving applications such as collective environment perception and cooperative collision avoidance \cite{5gv2x}, and vehicular crowd-sensing for road monitoring and parking navigation \cite{vehcrowd}. Applications in conventional MEC systems are also supported by the VEC system for passengers and pedestrians, including augmented reality, cloud gaming, and etc.

	In the VEC system, the \emph{offloading delay}, including data transmission and computation, is the key performance metric, and it is vital to schedule tasks and allocate computation resources for real-time computing services.
	Tasks can be offloaded from TaVs to SeVs directly in a \emph{distributed} manner, or collected by the RSUs and then assigned to the SeVs in a \emph{centralized} manner \cite{Sun2019CMag}.
	In the literature, centralized resource allocation schemes are proposed in \cite{choo2017sdvc,zheng2015smdp,Jiang2017IoT}, wherein the communication and computation resources are optimized globally based on Markov decision process (MDP) with the coordination of RSUs. However, the complexity is usually very high due to a large state space involving many vehicles and tasks. The global states, including locations, velocities, moving directions of vehicles, wireless channel states and available computation resources, should also be collected by RSUs frequently, leading to high signaling overhead.
	An alternative way is to make offloading decisions in a distributed manner by task requesters \cite{fengave,zhou2019reliable , Sun2018ICC,Sun2019TVT}. 
	In this context, it is still difficult for the TaV to acquire the global state information of SeVs and the offloading behaviors of other TaVs in the neighborhood. Contract theory is adopted in \cite{zhou2019reliable}, while online learning algorithm based on multi-armed bandit (MAB) theory is proposed in \cite{Sun2018ICC, Sun2019TVT}, to overcome the challenges.
	
	In fact, challenges and opportunities coexist in the VEC system.
	On the one hand, task offloading in the VEC system faces a more volatile environment, where the network topology and wireless channels vary rapidly due to vehicle movements.
	On the other hand, moving vehicles acting as VEC servers can provide more computation offloading opportunities, while at the same time relieving the impact from the voltile environment.
	
	To further exploit the computation resources in the VEC system, we introduce \emph{task replication}. Specifically, each task is replicated to multiple candidate SeVs at the same time and executed by them independently. Upon the first result transmitted back from one of the selected SeVs, the task is completed.
	Task replication technique is adopted in large-scale cloud computing servers to reduce delay and mitigate the straggler effect, and the impact of redundancy level (i.e., the number of task replicas) on the delay performance is analyzed based on queueing theory \cite{gardner2015reducing,gardner2017redundancy,joshi2017efficient}.
	However, the transmission delay and the dynamic network topology are not considered in these works.
	Introducing task replication to the VEC system, a centralized algorithm that maximizes the task completion ratio is proposed in \cite{Jiang2017IoT} based on MDP, while a contextual MAB based learning algorithm is proposed in \cite{chen2019task}, enabling the RSU to treat the service delay as a grey box. However, no theoretical results have been revealed on how to select the number of task replicas under different network conditions, including density of vehicles, task arrival rates and service capabilities, to optimize the quality of service such as delay and reliability.
	Moreover, these algorithms require the RSUs to collect and assign tasks in a centralized manner.
	
	In this paper, we study the task replication problem in the VEC system, with the objective of delay minimization. 
	The optimal number of task replicas is derived, and a distributed learning-based task replication algorithm is proposed.
	The main contributions are summarized as follows:
	\begin{itemize}
		\item We propose a distributed task replication framework, which enables any TaV to offload task replicas to multiple candidate SeVs in a distributed manner, so as to minimize the average offloading delay under the task failure constraint. 
		
		\item Performance analysis is carried out to characterize how the number of task replicas affects the offloading delay, given the network conditions such as density of vehicles, average task arrival rate and computing power. The optimal number of task replicas is approximated in closed-form, and is validated through simulations.
		
		\item Exploiting combinatorial MAB (CMAB) theory, an online learning-based task replication algorithm (LTRA) is proposed, which can adapt to the dynamics of the VEC system, with provable bounded learning regret.
		
		\item A realistic traffic scenario is generated via traffic simulator Simulation for Urban MObility (SUMO) to evaluate the proposed task replication algorithm. Results show the delay reduction brought by the joint effort of task replication and online learning. Specifically, under our settings, using LTRA with the optimal number of task replicas can reduce the average offloading delay by over $30\%$, compared to the benchmark without task replication. Meanwhile, the task completion ratio can be improved from $97\%$ to over $99.6\%$.
	\end{itemize}
	
	The rest of this paper is organized as follows.
	In Section \ref{mod}, we introduce the system model and problem formulation.
	The impacts of the number of task replicas on the delay performance and task failure probability are analyzed in Section \ref{sys_ana}, and the task replication algorithm is then proposed in Section \ref{algo}.
	Numerical and simulation results are shown in Section \ref{sim}, and the paper is finally concluded in Section \ref{con}. 
	
	Throughout the paper, we use $\mathbb{E}(\cdot)$ to represent the expectation operation, and $\mathbb{P}[\cdot]$ to represent the probability of an event.
	Define $\mathbb{I}\{x\}$ as an indicator function, where $\mathbb{I}\{x\}=1$ if condition $x$ is true, and $\mathbb{I}\{x\}=0$ otherwise.
	The cardinality of a set is denoted by $|\cdot|$. Let $\binom{n}{k}$ denote the number of combinations of choosing $k$ items out of $n$ at a time.
	$\left\lceil x \right\rceil$ maps the least integer equal to or greater than $x$, and $\text{round}(x)$ maps $x$ to its nearest integer.
	\begin{figure*}[!t]
		\centering	
		\includegraphics[width=1\textwidth]{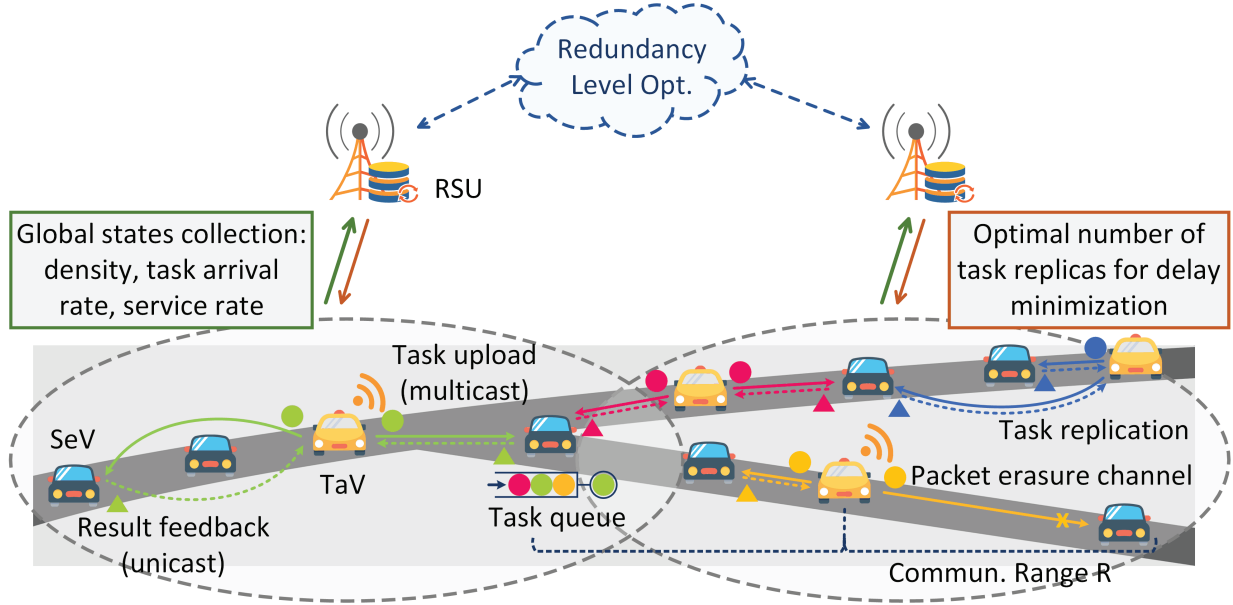}	
		\caption{Illustration of task replication in the VEC system. The RSUs collect general network conditions, optimize the number of task replicas based on the global states for delay minimization, and broadcast the optimal number of task replicas to the vehicles. Meanwhile, TaVs offload the replicas of tasks to the neighboring SeVs in a distributed manner, which involves three procedures: task upload, task execution and result feedback.}
		\label{sys}
	\end{figure*} 
	
	\section{System Model and Problem Formulation} \label{mod}
	
	\subsection{System Overview}
	
	As shown in Fig. \ref{sys}, we consider computation task offloading among moving vehicles in a VEC system. \emph{TaVs} generate computation tasks and offload them to the neighboring \emph{SeVs}, with surplus computation resources, for remote execution. Note that the role of each vehicle, i.e., being either a TaV or an SeV, can change across time, which mainly depends on whether it has free computation resources to share.
	
	For each TaV, the SeVs moving in the same direction within its single-hop communication range $R$ are considered as candidates. Multiple candidate SeVs may be able to process each task, and we introduce \emph{task replication} technique to counter the dynamics in the vehicular networks. In particular, each task is replicated and offloaded to multiple candidate SeVs and processed by them independently. Once the first response is received from one of the selected SeVs, the task is completed.
	However, we assume that other slower SeVs do not cancel the replicas of this task upon its completion, due to two main reasons.
	On the one hand, cancellations require TaVs to exchange additional information with SeVs, and cancelling tasks at SeVs introduces additional delay \cite{gardner2017better}, which further complicate the task replication problem.
	On the other hand, the proposed reinforcement learning algorithm needs to observe the delay performance of all the selected SeVs, which will be introduced in Section \ref{algo}.
	
	
	Distributed task offloading is considered in this work. That is, each TaV selects a subset of candidate SeVs to serve each task on its own, without any coordinations with other vehicles.
	Without loss of generality, we will focus on a typical TaV and design the task replication algorithm in the following.
	
	\subsection{SeV Candidates and Task Arrival}
	In the considered time period, the TaV has a total number of $T$ tasks to be offloaded. For the $t$-th task, the candidate SeV set is denoted by $\mathcal{N}_t$, which may vary across time due to vehicle movements. These candidate SeVs may also serve the tasks from other TaVs simultaneously. We assume that the TaV can always associate to at least one SeV during the considered period, i.e., $\mathcal{N}_t \neq \emptyset$ for $\forall t=1,\cdots,T$. Otherwise, the TaV may offload tasks to RSUs, which is beyond the scope of this paper. A subset of SeVs, denoted by $\mathcal{S}_t\subseteq\mathcal{N}_t$, are selected to process the replicas of the $t$-th task. We assume that the number of task replicas is fixed as $K$, where $K$ should be carefully designed based on the network conditions, such as the densities of TaVs and SeVs, task arrival rates at TaVs, service capabilities of SeVs, etc. If $ |\mathcal{N}_t| \geq K$, we have $|\mathcal{S}_t|=K$ and $\mathcal{S}_t \subseteq \mathcal{N}_t$. If $ |\mathcal{N}_t|<K$, let $\mathcal{S}_t =\mathcal{N}_t$.
	
	The arrival of tasks at the TaV is modeled by a Poisson process with rate $\lambda_0$ (in tasks per second). The input data size to be uploaded from TaV to SeV, and the output data size transmitted back from SeV to TaV, are assumed to be identical across time, and denoted by $L_i$ and $L_o$ (in bits) respectively. This is a reasonable assumption since tasks generated from the same kind of applications have similar input and output data size. Moreover, tasks with different input data volumes can be further partitioned into subtasks and offloaded sequentially. For example, video segmentation technique can be adopted to partition long video frames into short video clips for classification or detection purposes \cite{cvpr2010}.
	
	\subsection{Task Replication Procedure}
	In the VEC system, there are three procedures for task replication, i.e., task upload, task execution and result feedback.
	Details of the three procedures and the corresponding delay models are described in the following.
	
	\subsubsection{Task Upload}
	We consider that the replicas of each task are multicast to all the selected SeVs without retransmission using IEEE 802.11p protocol \cite{80211p}, so that replication \emph{does not} bring additional communication burdens to the VEC system. For the $t$-th task, the uplink wireless channel state between the TaV and SeV $n \in \mathcal{N}_t$ is denoted by $h_u(t,n)$, and the interference power is denoted by $I_u(t,n)$. Given the uplink channel bandwidth $W_u$, transmission power $P$ and noise power $N_0$, the achievable uplink transmission rate between TaV and SeV $n \in \mathcal{N}_t$, denoted by $r_u(t,n)$, is given by
	\begin{align} \label{uplink_raten}
	r_u(t,n) = W_u\log_2\left(1 + \frac{Ph_u(t,n)}{N_0 + I_u(t,n)}\right).
	\end{align}
	
	We assume that the transmission link between TaV and SeV $n$ is a packet erasure channel with erasure probability $p_e(t,n)$. That is, the input data of the $t$-th task is either successfully received by SeV $n$ with probability $1-p_e(t,n)$, or failed with probability $p_e(t,n)$, due to the randomness of channels such as blockage or deep fast fading. We also assume that the erasure behavior of each SeV is independent from other SeVs.
	
	Note that all the selected SeVs should be able to receive the task via multicast. Therefore, given the selected SeV set $\mathcal{S}_t$, the achievable uplink transmission rate is given by
	\begin{align} \label{uplink_rate}
	r_u(t,\mathcal{S}_t) = \min_{n\in \mathcal{S}_t} r_u(t,n) .
	\end{align}
	Then the uploading delay, denoted by $d_u(t,\mathcal{S}_t)$, can be written as
	\begin{align}  
	d_u(t,\mathcal{S}_t) = \frac{L_i }{r_u(t,\mathcal{S}_t)}.
	\end{align}
	
	\subsubsection{Task Execution}
	For the $t$-th task, the SeVs that can successfully receive its input data is denoted by $\mathcal{S}'_t$, with $\mathcal{S}'_t\subseteq \mathcal{S}_t$. A task failure occurs when all the selected SeVs fail to receive the input data packets, with probability
	\begin{align}
	p_f(t,\mathcal{S}_t)=\prod_{n\in \mathcal{S}_t} p_e(t,n).
	\end{align}
	
	
	Each candidate SeV $n$ can serve multiple TaVs within its service range $R$, the the offloading behavior of these TaVs are independent from others. Based on the fact that the superposition of independent arrivals resembles a Poisson process  \cite{Sriram1983characterizing,Ko2018wireless}, the task arrival at each candidate SeV $n$ is approximated by a Poisson process with rate $\lambda_c(t,n)$ (in tasks per second).
	Parameter $\lambda_c(t,n)$ reflects the workload of each SeV $n$, which is related to the number of TaVs within its communication range, as well as the task arrival rate and the number of replicas of each TaV.
	
	Task execution at each SeV is modeled by an M/M/1 queueing system according to the first-come first-served discipline, where the service rate of SeV $n$ is denoted by $\mu_c(n)$ (in tasks per second), and the task processing delay (service time) follows exponential distribution with mean $\frac{1}{\mu_c(n)}$.
	Define the total task execution delay (sojourn time) of the $t$-th task as $d_c(t,n)$, which includes queueing delay and processing delay. According to queueing theory \cite{queueing_theo}, task execution delay $d_c(t,n)$ follows exponential distribution with mean $\frac{1}{\mu_c(n)-\lambda_c(t,n)}$.
	
	\subsubsection{Result Feedback}
	Upon completion, each selected SeV $n \in \mathcal{S}'_t$ unicasts the computation results back to the TaV using a spectrum orthogonal to that for task uploading. 
	We assume that the results can always be delivered back to the TaV successfully, with retransmissions if necessary. Define the result feedback delay as $d_d(t,n)$, including queueing delay and transmission delay, which may be affected by many factors such as downlink channel state, interference power, link reliability and retransmission times. 
	The expression of $d_d(t,n)$ is not specified in our work, since the proposed task replication algorithm in Section \ref{algo} can learn it.

	\subsection{Problem Formulation}
	Conditioned on $\mathcal{S}'_t \neq \emptyset$, i.e., at least one SeV successfully receives the input data of the $t$-th task, the \emph{offloading delay} $d(t,\mathcal{S}_t)$, including task upload, execution and result feedback delay, can be written as
	\begin{align}
	d(t,\mathcal{S}_t)=d_u(t,\mathcal{S}_t)+\min_{n \in \mathcal{S}'_t} \left( d_c(t,n) +d_d(t,n)   \right).
	\end{align}
	
	The objective is to minimize the average offloading delay of $T$ tasks under a failure probability constraint, by optimizing the task replication decisions $\mathcal{S}_t$:
	\begin{subequations}
		\begin{align}
		\textbf{P1:}~\min_{\mathcal{S}_1,...,\mathcal{S}_T}& \frac{1}{T}\sum_{t=1}^{T} \mathbb{I}\{\mathcal{S}'_t \neq\emptyset\}d(t,\mathcal{S}_t) \\
		\text{s.t.}~~~ &\frac{1}{T}\sum_{t=1}^{T} p_f(t,\mathcal{S}_t) \leq \theta_f, \\
		&\mathcal{S}_t\subseteq\mathcal{N}_t, ~t=1,\cdots,T,
		\end{align}
	\end{subequations}
	where $\theta_f$ is the threshold of the task failure probability.
	
	In practical VEC systems, it is impossible for each TaV to acquire the \emph{future} state information, including future candidate SeVs and the corresponding transmission rates, packet erasure probabilities, etc. In addition, due to the limited signaling resources and the decentralized nature of VEC systems, it is also very difficult for the typical TaV to acquire the current \emph{global} state information such as the densities of TaVs in the neighborhood, the workloads of candidate SeVs and the wireless channel states. 
	Consequently, the TaV has no idea how to make task replication decisions to solve problem \textbf{P1}, i.e., how many SeVs and which SeVs to select.
	
	To deal with the aforementioned challenges, we will solve the problem in two steps in the following. First, we analyze the optimal number of task replicas from the centralized view, based on the general network conditions collected by RSUs. Based on this result, we further propose a \emph{learning while offloading} solution to enable the TaV to learn the delay performance of its candidate SeVs, without requiring future or global states.

	\section{Near-Optimal Number of Task Replicas} \label{sys_ana}
	In this section, we carry out performance analysis to derive the optimal number of task replicas, in terms of minimizing the average task offloading delay while satisfying the failure probability constraint.
	
	To enable the analysis, we consider a single-lane road system, where TaVs and SeVs are modeled by two independent one dimensional Poisson point processes (PPPs) with densities $\gamma_t$ and $\gamma_s$ (in vehicles per $\mathrm{km}$), respectively. We consider a homogeneous and stationary system where TaVs have the same task arrival rate $\lambda_0$, SeVs have the same service rate $\mu_c$, for $\forall n$, and the packet erasure probability is also identical, denoted by $p_e$, for $\forall t,n$. 
	The transmission rate for task upload is $r_u$, for $\forall t,n$. The result feedback is assumed to be successful with negligible delay, since the output data size $L_o$ is usually small \cite{you2016energy,sun2017emm}. 
	
	As discussed in the previous section, task upload delay is \emph{not} related to the number of task replicas due to multicast. Therefore, we do not focus on the task upload delay. Details on the analysis of transmission delay in vehicular networks can be found in \cite{Yao2013Delay, Yao2013Performance}. 
	Nevertheless, we remark that multicast may lead to packet collisions if multiple TaVs transmit data simultaneously. 
	A packet of the typical TaV may collides with two kinds of TaVs according to their relative locations.
	1) Collisions with other TaVs within the carrier sensing range of the typical TaV only occurs if multiple TaVs transmit at the same backoff slot.
	2) TaVs which are outside the carrier sensing rang of the typical TaV while within the communication range of candidate SeVs are called hidden TaVs. The typical TaV cannot sense whether or not hidden TaV is transmitting, and vice versa. In this case, collision occurs if the whole transmission periods of typical and hidden TaVs are overlapped. 
	Collisions mainly happen with hidden TaVs. However,  by choosing a proper communication range such that the carrier sensing range of each TaV is larger than $2R$, no hidden TaVs exist in the network. The collision probability is then negligible if the contention window size is large and the backoff slot length is short, which are usually true in the realistic VEC systems. 
	
	
	
	\subsection{Average Task Arrival Rate at each SeV}
	To analyze the task execution delay and derive the optimal number of task replicas, we first characterize the average task arrival rate at each SeV.
	
	Consider a typical SeV $n_0$ within the communication range of the typical TaV. Denote the set of TaVs within the communication range of SeV $n_0$ by $\mathcal{Y}_0$. For any TaV $i\in \mathcal{Y}_0$, denote the number of its candidate SeVs by $Y_i$. Assume that TaV $i$ randomly selects $K$ candidate SeVs for task replication, if $Y_i\geq K$. If $K<Y_i$, the replicas of the task are offloaded to all the candidate SeVs. Then the probability that SeV $n_0$ is selected by TaV $i$ is $\frac{\min\{K, Y_i\}}{Y_i}$. Denote the average task arrival rate at SeV $n_0$ by $\lambda_c$, which can be written as
	\begin{align}
	\lambda_c&=\mathbb{E}\left[\sum_{i=1}^{|\mathcal{Y}_0|} \frac{\min\{K, Y_i\}}{Y_i} \lambda_0\right] . 
	\end{align} 
	
	Let $\bar{\gamma}_t=2R\gamma_t$, $\bar{\gamma}_s=2R\gamma_s$ be the average number of TaVs and SeVs within length $2R$, respectively.
	An upper bound of $	\lambda_c$ is given by the following Lemma.
	
	\begin{lemma} \label{arrival_rate}
		Given the number of task replicas $K$, an upper bound of the average task arrival rate is given by
		\begin{align}
		\lambda_c\leq \left(\bar{\gamma}_t+1\right) \lambda_0 K  \sum_{k=1}^{\infty} \frac{1}{k}\frac{\bar{\gamma}_s^k}{k!} e^{-\bar{\gamma}_s} .
		\end{align}
	\end{lemma}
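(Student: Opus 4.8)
The plan is to start from the given identity
\[
\lambda_c = \lambda_0\,\mathbb{E}\!\left[\sum_{i=1}^{|\mathcal{Y}_0|} \frac{\min\{K,Y_i\}}{Y_i}\right]
\]
and reach the stated bound in three moves: first identify the laws of $|\mathcal{Y}_0|$ and of each candidate count $Y_i$ from the two independent PPPs; then separate the expectation of this randomly indexed sum into (expected number of summands)$\,\times\,$(expected per-TaV selection probability); and finally bound the per-TaV term by discarding the $\min$.

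For the distributions, I would argue as follows. Since $n_0$ is a candidate SeV of the typical TaV, that TaV lies within distance $R$ of $n_0$ and hence belongs to $\mathcal{Y}_0$. Conditioning on the typical TaV and invoking Slivnyak's theorem, the remaining TaVs still form a PPP of density $\gamma_t$, so the number of them inside the length-$2R$ window centred at $n_0$ is Poisson with mean $\bar{\gamma}_t$. Therefore $|\mathcal{Y}_0|$ is one (the typical TaV itself) plus a $\mathrm{Poisson}(\bar{\gamma}_t)$ variable, giving $\mathbb{E}[|\mathcal{Y}_0|]=\bar{\gamma}_t+1$, which is exactly the prefactor in the claim. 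For any TaV $i\in\mathcal{Y}_0$, its candidate count $Y_i$ is the number of SeVs within its own length-$2R$ range; as the SeV process is an independent PPP of density $\gamma_s$, the marginal law of each $Y_i$ is $\mathrm{Poisson}(\bar{\gamma}_s)$, and independence of the two processes makes the marks $\{Y_i\}$ independent of the TaV count $|\mathcal{Y}_0|$.

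Next I would factorize the expected sum via Campbell's theorem (equivalently a Wald-type identity) and then bound the single-TaV term. The expectation of a sum over the TaV point process carrying i.i.d.\ marks splits as $\mathbb{E}[|\mathcal{Y}_0|]\cdot\mathbb{E}\!\big[\tfrac{\min\{K,Y\}}{Y}\big]$ with $Y\sim\mathrm{Poisson}(\bar{\gamma}_s)$. Expanding the per-TaV expectation, the $Y=0$ atom contributes nothing (no candidate means no selection), so the sum effectively starts at $k=1$; using $\min\{K,k\}\le K$ for $k\ge1$ gives
\[
\mathbb{E}\!\left[\frac{\min\{K,Y\}}{Y}\right]
=\sum_{k=1}^{\infty}\frac{\min\{K,k\}}{k}\,\frac{\bar{\gamma}_s^{\,k}}{k!}e^{-\bar{\gamma}_s}
\le K\sum_{k=1}^{\infty}\frac{1}{k}\,\frac{\bar{\gamma}_s^{\,k}}{k!}e^{-\bar{\gamma}_s}.
\]
Multiplying the factor $\lambda_0$, the count $\bar{\gamma}_t+1$, and this bound produces the claimed inequality.

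The step I expect to be the main obstacle is the bookkeeping in the factorization: justifying that the expected sum over a random set of neighbouring TaVs cleanly splits into the product $\mathbb{E}[|\mathcal{Y}_0|]\,\mathbb{E}[\cdot]$, even though the length-$2R$ ranges of nearby TaVs overlap and their marks $Y_i$ are therefore \emph{correlated}. Campbell's (Mecke's) formula is what makes this legitimate, since the expectation of a sum over a PPP depends only on the intensity measure and the marginal mark law, not on the cross-correlations of the marks; and it is the separate, conditioned contribution of the typical TaV that yields the factor $\bar{\gamma}_t+1$ rather than $\bar{\gamma}_t$. The only remaining care is to note that the $Y=0$ event is harmless, so the bound $\min\{K,Y\}\le K$ may be applied termwise from $k=1$ onward.
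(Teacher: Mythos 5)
Your proposal is correct and follows essentially the same route as the paper's Appendix~A proof: factor the randomly indexed sum into $\mathbb{E}[|\mathcal{Y}_0|]\cdot\mathbb{E}\bigl[\min\{K,Y\}/Y\bigr]\lambda_0$, obtain $\mathbb{E}[|\mathcal{Y}_0|]=\bar{\gamma}_t+1$ via Slivnyak's theorem, take $Y\sim\mathrm{Poisson}(\bar{\gamma}_s)$, and bound $\min\{K,k\}\le K$ termwise for $k\ge 1$. The only difference is cosmetic: you justify the factorization via Campbell/Mecke (carefully noting that the marks $Y_i$ are correlated but only their marginal law matters), whereas the paper simply invokes Wald's equation with the independence of $|\mathcal{Y}_0|$ from the $Y_i$ -- both justifications are valid for this identically distributed, count-independent setting.
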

	\begin{proof}
		See Appendix \ref{a1}.
	\end{proof}

	\subsection{Task Execution Delay}
	Denote the number of candidate SeVs within the communication range of the typical TaV by $N_s$, which is a random variable following Poisson distribution with rate $\bar{\gamma}_s$.
	We only consider the case when $N_s\geq1$.
	Define $S$ as the number of SeVs that can successfully receive the task from the typical TaV.
	Since the packet erasure behavior of each SeV is independent of others, $S$ follows binomial distribution. Specifically, 
	\begin{align}
	\mathbb{P}[S=k]= \begin{cases}\binom{N_s}{k} (1-p_e)^k p_e^{N_s-k},&N_s<K,~k=1,2,\cdots,N_s, \\
	\binom{K}{k}(1-p_e)^k p_e^{K-k},&N_s\geq K, k=1,2,\cdots,K.\end{cases}   
	\end{align}
	
	
	At each SeV, task execution is modeled by an M/M/1 queue with arrival rate $\lambda_c$ and service rate $\mu_c$. Therefore, the task execution delay follows exponential distribution with mean $\frac{1}{\mu_c-\lambda_c}$.
	Since the result feedback is assumed to be with negligible delay without packet loss, 
	the average task execution delay is the first order statistics of $S$ exponential distributions, which equals to $\frac{1}{S(\mu_c-\lambda_c)}$.
	Let $K_s\triangleq \min\{K,N_s\}$. 
	Given the number of candidate SeVs $N_s$, the expected task execution delay is $\sum_{k=1}^{K_s}\mathbb{P}[S=k]\frac{1}{S(\mu_c-\lambda_c)}$.
	Since $N_s$ is a random variable following Poisson distribution with rate $\bar{\gamma}_s$, the expected task execution delay, denoted by $D_c$, can be given by
	\begin{align}\label{exe_delay}
	D_c=\sum_{N_s=1}^{\infty} \frac{\bar{\gamma}_s^{N_s}}{N_s!} e^{-\bar{\gamma}_s} 
	\sum_{k=1}^{K_s} \binom{K_s}{k}(1-p_e)^k p_e^{K_s-k}\frac{1}{k}\frac{1}{\mu_c-\lambda_c},
	\end{align}
	where the number of task replicas $K$ is the optimization variable.
	
	
	An approximation to the optimal number of replicas that minimizes $D_c$ is given in the following Theorem.
	
	\begin{theorem} \label{theo1}
		The optimal number of task replicas that minimizes the average task execution delay $D_c$ is approximated by
		\begin{align}
		\tilde{K}^*= \frac{\mu_c}{2\lambda_0  \left(\bar{\gamma}_t+1\right) \left( \frac{1}{\bar{\gamma}_s}+ \frac{1}{\bar{\gamma}_s^2}\right)}.
		\end{align}
	\end{theorem}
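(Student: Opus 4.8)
The plan is to collapse $D_c$ into a one-dimensional function of a continuous relaxation of $K$ that has a simple, explicitly minimizable form, and then locate its stationary point. The governing observation is that $D_c$ factors into a \emph{load} term $\frac{1}{\mu_c-\lambda_c}$ and a \emph{replication-gain} term (the double sum), and that these pull $K$ in opposite directions: increasing $K$ shrinks the gain term (more parallel replicas let the first response arrive sooner) but inflates $\lambda_c$, and hence the load term, since each SeV is then selected more often. Finding the optimal trade-off is thus a matter of pinning down how each factor scales with $K$.

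First I would replace $\lambda_c$ by the closed-form estimate from Lemma~\ref{arrival_rate}, treating its upper bound as an approximate equality, and evaluate the Poisson sum $\sum_{k=1}^{\infty}\frac{1}{k}\frac{\bar{\gamma}_s^{k}}{k!}e^{-\bar{\gamma}_s}$, which is $\mathbb{E}[1/N]$ for $N\sim\mathrm{Poisson}(\bar{\gamma}_s)$ restricted to $N\ge 1$. A second-order Taylor expansion of $1/N$ about the mean $\bar{\gamma}_s$, together with $\mathrm{Var}(N)=\bar{\gamma}_s$, gives $\mathbb{E}[1/N]\approx \frac{1}{\bar{\gamma}_s}+\frac{1}{\bar{\gamma}_s^{2}}$. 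This yields the linear law $\lambda_c\approx A K$ with $A=\lambda_0(\bar{\gamma}_t+1)\big(\frac{1}{\bar{\gamma}_s}+\frac{1}{\bar{\gamma}_s^{2}}\big)$, so the load term becomes $\frac{1}{\mu_c-AK}$.

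Next I would argue that the replication-gain term scales like $c/K$. Given $N_s$ candidates, the number $S$ of SeVs that receive the input is $\mathrm{Binomial}(K_s,1-p_e)$ with $K_s=\min\{K,N_s\}$, and the inner sum is exactly $\mathbb{E}\big[\mathbb{I}\{S\ge1\}/S\big]$. By concentration of $S$ around its mean, $\mathbb{E}\big[\mathbb{I}\{S\ge1\}/S\big]\approx \frac{1}{K_s(1-p_e)}$; and in the operating regime the candidate set is typically larger than the chosen $K$, so $K_s\approx K$ and the outer Poisson weights sum to essentially one. Hence the gain term is $\approx \frac{1}{K(1-p_e)}=:\frac{c}{K}$, and collecting everything gives the compact surrogate $D_c\approx \frac{c}{K(\mu_c-AK)}$.

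Finally, treating $K$ as a positive real, minimizing this surrogate is equivalent to maximizing $f(K)=K(\mu_c-AK)=\mu_c K-AK^{2}$. Since $f$ is a downward parabola, setting $f'(K)=\mu_c-2AK=0$ gives the unique maximizer $K=\frac{\mu_c}{2A}$, with $f''=-2A<0$ confirming it maximizes $f$ and therefore minimizes $D_c$; substituting $A$ reproduces $\tilde{K}^{*}=\frac{\mu_c}{2\lambda_0(\bar{\gamma}_t+1)\big(\frac{1}{\bar{\gamma}_s}+\frac{1}{\bar{\gamma}_s^{2}}\big)}$. The main obstacle is the middle step: the clean answer hinges on the gain term scaling as $K^{-1}$ \emph{exactly} (a $K^{-\alpha}$ law would shift the optimum to $\frac{\alpha\mu_c}{(1+\alpha)A}$), so the real content lies in justifying both the dropping of the $\min$ (equivalently, that $K\le N_s$ with high probability at the relevant scale) and the approximation $\mathbb{E}\big[\mathbb{I}\{S\ge1\}/S\big]\approx 1/\big(K(1-p_e)\big)$, rather than in the elementary calculus that follows.
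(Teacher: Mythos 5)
Your proposal is correct and follows essentially the same route as the paper's proof: both treat the Lemma~\ref{arrival_rate} upper bound as an equality so that $\lambda_c\approx \hat{c}K$, collapse the replication-gain term to $\frac{1}{K(1-p_e)}$ by replacing $K_s=\min\{K,N_s\}$ with $K$, and minimize the surrogate $\frac{1}{K(\mu_c-\hat{c}K)}$ at the vertex of the parabola $K(\mu_c-\hat{c}K)$, yielding $\tilde{K}^*=\mu_c/(2\hat{c})$. The only differences are in how the two approximations are justified---you use a second-order delta-method expansion to get $\mathbb{E}[1/N]\approx \frac{1}{\bar{\gamma}_s}+\frac{1}{\bar{\gamma}_s^2}$ where the paper telescopes the series $\frac{1}{k}\approx\frac{1}{k+1}+\frac{1}{(k+1)(k+2)}$, and a concentration heuristic for $\mathbb{E}\bigl[\mathbb{I}\{S\geq 1\}/S\bigr]\approx\frac{1}{K(1-p_e)}$ where the paper derives the explicit binomial-identity bound $\frac{1-p_e^{K+1}-(K+1)(1-p_e)p_e^{K}}{K(1-p_e)}$ before invoking $p_e\approx 0$---but these produce identical expressions at every step.
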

	\begin{proof}
		See Appendix \ref{a2}.
	\end{proof}
	
	As shown in Theorem \ref{theo1}, the near-optimal number of task replicas $\tilde{K}^*$ related to four key parameters of the VEC system. Specifically, $\tilde{K}^*$ is proportional to the service capability $\mu_c$, inversely proportional to the task arrival rate $\lambda_0$, and approximately proportional to the SeV density $\bar{\gamma}_s$ and TaV density $\frac{1}{\bar{\gamma}_t}$. Remark that, as the number of task replicas is an integer in practice, we can round $\tilde{K}^*$ to its nearest integer for implementations.

	\subsection{Task Failure Probability}
	
	Define $P_f$ as the task failure probability.
	If $N_s<K$, $P_f=p_e^{N_s}$; otherwise $P_f=p_e^{K}$.
	Therefore, $P_f$ can be written as
	\begin{align}
	P_f=\sum_{N_s=1}^{K} \frac{\bar{\gamma}_s^{N_s}}{N_s!} e^{-\bar{\gamma}_s} p_e^{N_s}+\sum_{N_s=K+1}^{\infty} \frac{\bar{\gamma}_s^{N_s}}{N_s!} e^{-\bar{\gamma}_s}  p_e^{K}.	
	\end{align}
	
	\begin{lemma} \label{fail_prob}
		To guarantee the failure probability constraint $P_f\leq\theta_f$, a lower bound of the number of replicas is 
		\begin{align}
		K\geq \left\lceil\frac{\ln\theta_f }{\ln p_e} \right\rceil.
		\end{align}
	\end{lemma}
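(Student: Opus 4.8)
The plan is to turn the failure-probability constraint $P_f \leq \theta_f$ into a lower bound on $K$ by first bounding $P_f$ from below by a quantity that depends on $K$ in a simple, invertible way, and then solving the resulting scalar inequality for $K$. Since $P_f$ in the statement is a mixture over the random number of candidate SeVs $N_s$, the task is to reduce this mixture to the single term $p_e^{K}$.

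First I would observe that both pieces of $P_f$ share the structure $p_e^{\min\{N_s,K\}}$: in the first sum the exponent is $N_s\leq K$, and in the tail it is exactly $K$. Because $0<p_e<1$, the map $m\mapsto p_e^{m}$ is decreasing, so $p_e^{\min\{N_s,K\}}\geq p_e^{K}$ for every $N_s\geq 1$. Factoring $p_e^{K}$ out of both sums therefore gives
\[
P_f \;\geq\; p_e^{K}\sum_{N_s=1}^{\infty}\frac{\bar{\gamma}_s^{N_s}}{N_s!}e^{-\bar{\gamma}_s}\;=\;p_e^{K}\bigl(1-e^{-\bar{\gamma}_s}\bigr).
\]
Conditioned on the standing assumption that the TaV always sees at least one SeV (so that $N_s\geq 1$), the probabilities are renormalized by $1-e^{-\bar{\gamma}_s}$ and the bound sharpens to $P_f\geq p_e^{K}$. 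Either way, for any SeV density that is not vanishingly small we may treat $1-e^{-\bar{\gamma}_s}\approx 1$, so that $p_e^{K}$ is (a good approximation to) the \emph{smallest} possible value of the failure probability, attained in the regime where $N_s\geq K$ always holds and all $K$ replicas are deployed.

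Given $P_f\geq p_e^{K}$, the constraint $P_f\leq\theta_f$ can only hold if $p_e^{K}\leq\theta_f$; this is the necessary condition I want. Taking natural logarithms yields $K\ln p_e\leq \ln\theta_f$, and since $p_e<1$ makes $\ln p_e<0$, dividing through reverses the inequality and produces $K\geq \ln\theta_f/\ln p_e$ (a positive number, as both logarithms are negative). Finally, because $K$ must be a positive integer, the smallest admissible value is obtained by rounding up, giving $K\geq \lceil \ln\theta_f/\ln p_e\rceil$, as claimed.

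I expect the only delicate point to be the normalization factor $1-e^{-\bar{\gamma}_s}$: the unconditioned mixture sums to $1-e^{-\bar{\gamma}_s}<1$, so the clean bound $P_f\geq p_e^{K}$ (and hence the closed-form threshold) is exact only after conditioning on $N_s\geq 1$, or approximate when $\bar{\gamma}_s$ is treated as large enough that $e^{-\bar{\gamma}_s}$ is negligible. Once that modeling choice is fixed, the remaining algebra — pulling out $p_e^{K}$, taking logs with the sign flip, and invoking integrality — is routine.
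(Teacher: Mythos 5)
Your proof is correct and takes essentially the same route as the paper's, which simply asserts that $p_e^K$ lower-bounds $P_f$, sets $p_e^K\leq\theta_f$, and solves for $K$ with the sign flip from $\ln p_e<0$ and integer rounding. You are in fact more careful than the paper on the one delicate point: the paper never mentions the normalization factor $1-e^{-\bar{\gamma}_s}$, which, as you note, is absorbed by the standing assumption that at least one candidate SeV is present (i.e., conditioning on $N_s\geq 1$, consistent with the paper's assumption $\mathcal{N}_t\neq\emptyset$).
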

	\begin{proof}
		See Appendix \ref{a3}.
	\end{proof}
	
	Combining Theorem \ref{theo1} and Lemma \ref{fail_prob}, we obtain the approximation to the optimal number of replicas in the following Corollary.
	
	\begin{corollary} \label{corollary_1}
		To minimize the task execution delay while satisfying the task failure probability threshold, the number of task replicas should be set to
		\begin{align} \label{K_star}
		K^*=\max \left\{  \text{round}\left(\tilde{K}^*\right),   \left\lceil\frac{\ln\theta_f }{\ln p_e} \right\rceil   \right\}.
		\end{align}
	\end{corollary}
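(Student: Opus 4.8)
The plan is to obtain \eqref{K_star} by combining the unconstrained delay-minimizer of Theorem \ref{theo1} with the feasibility requirement of Lemma \ref{fail_prob}, exploiting the asymmetry that the failure-probability constraint only bounds $K$ from below while the delay objective $D_c$ is (approximately) unimodal in $K$. First I would recall from Lemma \ref{fail_prob} that the constraint $P_f\leq\theta_f$ is equivalent to $K\geq K_{\min}$, where $K_{\min}\triangleq\left\lceil\frac{\ln\theta_f}{\ln p_e}\right\rceil$. Thus the feasible set of replica numbers is the integer half-line $\{K\in\mathbb{N}:K\geq K_{\min}\}$; crucially, no \emph{upper} bound is imposed, since adding replicas can only decrease $P_f$. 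This reduces the constrained problem to minimizing $D_c$ over this half-line.

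Next I would invoke the structure of $D_c$ established while deriving Theorem \ref{theo1}. Under the analysis approximation, $D_c$ is treated as a convex (hence unimodal) function of $K$ whose stationary point is $\tilde{K}^*$: $D_c$ is decreasing for $K<\tilde{K}^*$ and increasing for $K>\tilde{K}^*$. Consequently, over any half-line $\{K\geq K_{\min}\}$ the minimizer is the projection of $\tilde{K}^*$ onto that half-line, namely $\max\{\tilde{K}^*,K_{\min}\}$: if $\tilde{K}^*\geq K_{\min}$ the interior minimizer is feasible, and otherwise every feasible point lies on the increasing branch so the left endpoint $K_{\min}$ is optimal.

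To conclude I would pass to integers with a two-case argument. If $\text{round}(\tilde{K}^*)\geq K_{\min}$, the rounded unconstrained minimizer is itself feasible and, by unimodality, is the best integer choice, so $K^*=\text{round}(\tilde{K}^*)$. If instead $\text{round}(\tilde{K}^*)<K_{\min}$, then every feasible integer lies in the increasing region $K>\tilde{K}^*$, so $D_c$ is minimized over the feasible set at its left endpoint $K=K_{\min}$, giving $K^*=K_{\min}$. The two cases are summarized by $K^*=\max\{\text{round}(\tilde{K}^*),K_{\min}\}$, which is exactly \eqref{K_star}.

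The hard part will be the second case: verifying that when $\tilde{K}^*$ is infeasible the best \emph{integer} in the feasible set is precisely $K_{\min}$ rather than some larger value. This hinges on $D_c$ being monotonically increasing throughout $K>\tilde{K}^*$, not merely having a local minimum at $\tilde{K}^*$, a property that must be carried over from the convexity argument used to locate $\tilde{K}^*$ in Theorem \ref{theo1}. A related subtlety to reconcile is the mismatch between the nearest-integer rounding applied to the delay term and the ceiling applied to the constraint term, which must be handled carefully precisely in this boundary case so that the resulting $\max$ expression is the genuine integer minimizer.
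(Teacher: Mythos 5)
Your proposal matches the paper's reasoning: the paper gives no separate proof for Corollary~\ref{corollary_1}, stating only that it follows by ``combining Theorem~\ref{theo1} and Lemma~\ref{fail_prob},'' and your projection-onto-the-feasible-half-line argument (delay objective approximately unimodal with minimizer $\tilde{K}^*$, failure constraint imposing only the lower bound $\lceil \ln\theta_f/\ln p_e\rceil$, hence take the max) is precisely that intended combination, spelled out in more detail. The subtleties you flag --- monotonicity of $D_c$ beyond $\tilde{K}^*$ and the rounding-versus-ceiling mismatch --- are real but are absorbed by the paper's own approximations, since both Theorem~\ref{theo1} and the corollary are explicitly presented as approximations rather than exact integer optima.
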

	
	
	\section{Distributed Task Replication Algorithm: A Combinatorial Multi-Armed Bandit based Approach} \label{algo}
	Based on the optimized number of task replicas $K^*$, we design a distributed task replication algorithm in this section. Recall that the instantaneous global states, such as the number of other TaVs in the neighborhood, the workloads and channel environments of SeVs are very challenging to be acquired by the TaV. Accordingly, the TaV cannot know a priori which candidate SeV can provide the fastest computation for each task.
	
	To overcome the aforementioned challenge, we propose a solution called \emph{learning while offloading}: the TaV can observe the delay performance of its candidate SeVs while offloading tasks, and learn about which subset of SeVs should be selected to minimize the offloading delay. 
	
	We further assume that the TaV makes task replication decisions only when a task becomes the head of the queue. 
	On the one hand, the TaV may face different candidate SeVs for the following tasks, so that the offloading decisions made in advance may not be able to be implemented.
	On the other hand, making offloading decisions for multiple tasks simultaneously complicates the optimization problem, which might be solved by reinforcement learning technique, but with very high complexity. 
	
	Then the task replication problem is an online sequential decision making problem, which is very similar to the MAB problem. 
	In the classical MAB problem, a player faces a fixed number of base arms with unknown rewards, and pulls one at a time to learn the reward distributions while maximizing the cumulative rewards over time. The major challenge of the MAB problem is the \emph{exploration-exploitation tradeoff} during the learning process: to explore different arms and learn a more accurate reward distribution, or to exploit the current knowledge and choose the empirically optimal arm. Such problem has been widely investigated, and upper confidence bound (UCB) based algorithms have been  proposed with performance guarantee \cite{auer2002finite}.
	
	An extension of MAB is called CMAB, in which a super arm, composed of a subset of base arms, is selected at a time. The player observes the rewards of all the selected base arms, and obtains a reward from the super arm, which can be either a linear or non-linear function of the rewards of base arms \cite{chen2013cmab,chen2016cmab}.
	Our task replication framework resembles the CMAB framework: each candidate SeV is a base arm with an unknown delay (loss) distribution, and the TaV is the player who selects a subset of SeVs $\mathcal{S}_t$ for each task. Then the offloading delay of SeV $n\in \mathcal{S}_t$ is observed upon result feedback. Note that there might be a packet loss or very long delay. 
	We define $d_{\text{max}} $ as the maximum offloading delay that is allowed for each task replica. 
	Specifically, for the $t$-th task, the offloading delay of SeV $n\in \mathcal{S}_t$ is 
	\begin{align}
	d(t,n)=\min \{d_u(t,\mathcal{S}_t)+ d_c(t,n) +d_d(t,n),d_{\text{max}} \},
	\end{align}
	and the offloading delay of the task is $d(t,\mathcal{S}_t)=\min_{n\in \mathcal{S}_t} d(t,n) $, which is a non-linear function of the individual offloading delay. Since the maximum delay $d_{\text{max}}$ can reflect packet loss, problem \textbf{P1} is transformed to \textbf{P2}:
	\begin{subequations}
		\begin{align}
		\textbf{P2:}~\min_{\mathcal{S}_1,...,\mathcal{S}_T} &\frac{1}{T}\sum_{t=1}^{T} d(t,\mathcal{S}_t) \\
		\text{s.t.}~~~ &\mathcal{S}_t\subseteq\mathcal{N}_t, ~t=1,\cdots,T.
		\end{align}
	\end{subequations}

	However, existing algorithms for CMAB problem cannot be implemented directly. In our problem, the candidate SeV set $\mathcal{N}_t$ changes accross time, with unknown appearance and disappearance time. Existing algorithms in \cite{chen2013cmab,chen2016cmab} should be revised in order to adapt to such a dynamic vehicular environment.
	
	As shown in Algorithm 1, we propose a learning-based task replication algorithm (LTRA). The offloading delay is first normalized according to 
	\begin{align} \label{d_nor}
	\tilde{d}(t,n)=\frac{d(t,n)}{d_{\text{max}}},
	\end{align}
	with $\tilde{d}(t,n)\in (0,1]$. 
	For any SeV $n$, denote the empirical probability density function (PDF) of $ 1-\tilde{d}(t,n)$ by $\hat D_n$, and the cumulative distribution function (CDF) by $\hat F_n$.
	Let $t_n$ indicate that the $n$-th SeV occurs upon offloading the $t_n$-th task. Let $k_{t,n}$ be the number of tasks offloaded to SeV $n$ among the first $t$ tasks, and $\alpha$ a constant factor.

	\begin{algorithm}[!t]
		\caption{Learning-based Task Replication Algorithm}
		\begin{algorithmic}[1]
			\For {$t=1,...,T$}
			\If { Any new SeV $n_s \in \mathcal{N}_t$, $n_s \notin \mathcal{N}_{t-1}$ appears}
			\State Connect to any subset $\mathcal{S}_t \subseteq \mathcal{N}_t$ once, with $n_s\in \mathcal{S}_t$ and $|\mathcal{S}_t|=K^*$.
			\State Wait for result and record delay $d(t,n)$, $\forall n\in \mathcal{S}_t$.
			\State Update empirical CDF $\hat F_n$ according to $\tilde{d}(t,n)$ and the selected times $k_{t,n}\leftarrow k_{t-1,n}+1$ for $\forall n\in \mathcal{S}_t$. 
			\Else
			\State Update CDF $\underline{F}_n$ according to \eqref{utility}, and calculate the corresponding PDF $\underline{D}_n$, for $\forall n\in \mathcal{N}_t$.
			\State Select a subset of SeVs $\mathcal{S}_t$ according to \eqref{bestset}.		
			\State Offload the $t$-th task to SeV $\forall n \in \mathcal{S}_t$.
			\State Wait for result and record delay $d(t,n)$, $\forall n\in \mathcal{S}_t$.
			\State Update $\hat F_n$ and $k_{t,n}\leftarrow k_{t-1,n}+1$, $\forall n\in \mathcal{S}_t$.
			\EndIf
			\EndFor		
		\end{algorithmic}
	\end{algorithm}
	
	In Algorithm 1, Lines 2-5 are the initialization phase, which is called at the start of the learning process as well as the time when new candidate SeV occurs.
	The TaV selects a subset of $K^*$ SeVs that contains the newly appeared SeVs, where $K^*$ is obtained according to \eqref{K_star}. Note that $\mathcal{N}_{0}=\emptyset$, $k_{0,n}=0$, and if, occasionally, the newly appeared SeVs are more than $K^*$, we allow the TaV to offload the tasks to all the new SeVs. 
	
	Lines 6-11 are the main loop of LTRA. Taking into consideration the occurrence time $t_n$ of SeV $n$, a CDF $\underline{F}_n(x)$  is defined as
	\begin{align} \label{utility}
	\underline{F}_n(x) \!\! = \!\!    \begin{cases}\max \! \left\{ \hat{F}_n(x) \!\! - \!\!  \sqrt{\frac{\alpha\ln (t-t_n)}{k_{t-1,n}}},0 \right\},\!\! &\!\! 0\leq x <1, \\
	1,&\!\! x=1.\end{cases}        
	\end{align}	
	Let $\underline{D}_n$ be the distribution of $\underline{F}_n$, and $\underline{D}=\underline{D}_1 \times \underline{D}_2 \times ... \times \underline{D}_{|\mathcal{N}_t|} $ the joint distribution over all candidate SeVs. The subset of SeVs is selected according to
	\begin{align} \label{bestset}
	\mathcal{S}_t= \argmin_{\mathcal{S} \subseteq \mathcal{N}_t ,|\mathcal{S} |=\min\{|\mathcal{N}_t|, K^*\}} \mathbb{E}_{\underline{D}}\left[\min_{n \in \mathcal{S}}d(t,n)\right]. 
	\end{align}
	The calculation of $\mathcal{S}_t$ is a minimum element problem, which can be solved by greedy algorithms \cite{goel2010how}.
	Then the TaV multicasts the input data of the task to the selected SeVs $n \in \mathcal{S}_t$, waits for the results for a maximum time length $d_\text{max}$, and records the corresponding delay $d(t,n)$. Finally, the TaV updates the empirical CDF $\hat F_n$ according to normalized delay $\tilde{d}(t,n)$, as well as the selected times $k_{t,n}$.
	
	We remark that, the proposed LTRA learns the \emph{entire delay distribution} of candidate SeVs, and is able to balance the \emph{exploration-exploitation tradeoff} during the learning process. Due to the non-linearity of the loss function $d(t,\mathcal{S}_t)=\min_{n\in \mathcal{S}_t} d(t,n) $, the offloading decision $\mathcal{S}_t$ cannot be decided merely by the mean delay of candidate SeVs, but their joint distribution. Therefore, the TaV records the empirical CDF $\hat F_n$ while learning.
	Meanwhile, $\underline{F}_n$ is designed to guide the offloading decisions.
	For an SeV with fewer selected times $k_{t,n}$, the padding term $\sqrt{\frac{\alpha\ln (t-t_n)}{k_{t-1,n}}}$ is large, so that the TaV finds it a good choice to provide possible low delay performance and \emph{explores} it.
	The TaV also tends to \emph{exploit} SeVs with lower offloading delay according to the empirical CDF $\hat F_n$. 
	Furthermore, it is easy to see that for $0\leq x\leq 1$, $\underline{F}_n(x) <\hat{F}_n(x) $, i.e., $\underline{F}_n(x)$ \emph{first-order stochastically dominates} $\hat{F}_n(x) $.
	The CDF $\underline{F}_n$ provides more optimistic estimations to those SeVs with less information learned, to balance the tradeoff between exploration and exploitation during the learning process.
	
	\subsection{Performance Analysis}
	To characterize the performance of the proposed LTRA, we assume that the candidate SeV set does not change during the considered time period, i.e., $\mathcal{N}_t=\mathcal{N}$, for $\forall t$. Moreover, the delay distribution $d(t,n)$ is independently and identically distributed (i.i.d.) with respect to the task index $t$. In the simulation results, we will show that without these two assumptions, the proposed algorithm still works well.
	
	For the $t$-th task, let the delay vector of SeVs be $\d_t=(d(t,1),...,d(t,N))$, where $N= |\mathcal{N}|$. Define the loss function as $L(\d_t, \mathcal{S}_t)=\min_{n\in \mathcal{S}_t}d(t,n)$, and let $\mu_\mathcal{S}=\mathbb{E}[L(\d_t, \mathcal{S}_t)], \forall t$.
	Furthermore, let $\mathcal{S}^*=\argmin_{\mathcal{S} \subseteq \mathcal{N}, |\mathcal{S} |=\min\{N, K^*\} }\mu_\mathcal{S}$ denote the optimal subset of SeVs with minimum expectation of offloading delay, and $\mu_{\mathcal{S}^*}=\min_{\mathcal{S} \subseteq \mathcal{N}, |\mathcal{S} |=\min\{N, K^*\} }\mu_\mathcal{S}$.
	
	The performance metric to characterize the learning algorithm is called \emph{learning regret}, which is defined as
	\begin{align}
	R_T=\mathbb{E}\left[   \sum_{t=1}^{T}L(\d_t, \mathcal{S}_t)   \right]   -T\mu_{\mathcal{S}^*}.
	\end{align}
	The learning regret is the expectation of the performance loss caused by learning process, which is compared to the genie-aided case where the TaV knows the exact delay distributions of candidate SeVs.
	
	For any suboptimal subset of SeVs $\mathcal{S}\subseteq \mathcal{N}$ with $|\mathcal{S} |=\min\{N, K^*\} $, denote the expectation of the performance gap by $\Delta_\mathcal{S}=(\mu_\mathcal{S}-\mu_{\mathcal{S}^*})/d_{\text{max}}$.
	Let 
	\begin{align}
	\Delta_n \!\!= \!\! \min\left\{\Delta_\mathcal{S}| \mathcal{S}\!\subseteq \!\mathcal{N}, |\mathcal{S} |  \!  = \!\min\{\!N, K^* \!\}, n\in  \mathcal{S},  \mu_\mathcal{S}>\mu_{\mathcal{S}^*}    \right\}. \nonumber
	\end{align}
	
	In the following theorem, we show an upper bound of the learning regret of the proposed LTRA.
	
	\begin{theorem} \label{theo}
		Let $\alpha=\frac{2}{3}$, then $R_T$ is upper bounded by:
		\begin{align} \label{regret_bound}
		R_{T}\leq d_{max}\left(C_1 K \sum_{n\in\mathcal{N}}\frac{\ln T}{\Delta_n}+C_2 \right),
		\end{align}
		where $C_1=2136$ and $C_2=\left(\frac{\pi^2}{3}+1\right)N$ are two constants.
	\end{theorem}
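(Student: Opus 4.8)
The plan is to treat this as a combinatorial multi-armed bandit (CMAB) regret analysis in the style of \cite{chen2013cmab,chen2016cmab}, but carried out at the level of \emph{distributions} rather than means, because the loss $L(\d_t,\mathcal{S})=\min_{n\in\mathcal{S}}d(t,n)$ is a nonlinear function of the per-arm delays. The first step is to rewrite the expected loss as a functional of the per-arm CDFs. Using independence of the delays across SeVs together with the convention that $\hat F_n$ and $F_n$ are CDFs of $1-\tilde d(t,n)$, I would show
\begin{align}
\mu_{\mathcal{S}} \;=\; d_{\max}\int_0^1 \prod_{n\in\mathcal{S}} F_n(w)\,\dx w ,
\end{align}
and that the surrogate objective minimized in \eqref{bestset} is the same integral with each $F_n$ replaced by $\underline F_n$. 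Two structural properties then follow and drive the whole proof: \emph{monotonicity} (pointwise-smaller CDFs give a smaller objective, which is exactly the first-order stochastic dominance noted after \eqref{utility} and supplies ``optimism'' for a minimization problem), and a \emph{bounded-smoothness} bound obtained by telescoping the product and using $|ab-a'b'|\le|a-a'|+|b-b'|$ for factors in $[0,1]$,
\begin{align}
\Bigl|\,\textstyle\int_0^1\prod_{n\in\mathcal{S}}F_n-\int_0^1\prod_{n\in\mathcal{S}}F_n'\,\Bigr|\;\le\;\sum_{n\in\mathcal{S}}\bigl\|F_n-F_n'\bigr\|_\infty .
\end{align}

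Next I would set up the concentration step in the i.i.d., fixed-arm-set regime assumed just before the theorem. Here $\hat F_n$ is an empirical CDF of $k_{t-1,n}$ samples, so the Dvoretzky--Kiefer--Wolfowitz inequality gives $\mathbb{P}[\sup_x|\hat F_n(x)-F_n(x)|>\varepsilon]\le 2e^{-2k_{t-1,n}\varepsilon^2}$. Taking $\varepsilon$ equal to the padding radius $\sqrt{\alpha\ln(t-t_n)/k_{t-1,n}}$ yields a polynomially decaying tail. Define the good event $\mathcal{E}_t=\{\sup_x|\hat F_n(x)-F_n(x)|\le\sqrt{\alpha\ln(t-t_n)/k_{t-1,n}}\ \forall n\}$; on $\mathcal{E}_t$ we simultaneously obtain optimism $\underline F_n\le F_n$ and a confidence width $\|F_n-\underline F_n\|_\infty\le 2\sqrt{\alpha\ln(t-t_n)/k_{t-1,n}}$. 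A union bound over arms and rounds makes the failure probabilities summable, and together with the one forced exploration per arm in Lines~2--5 this produces the additive term $d_{\max}C_2$ with $C_2=(\pi^2/3+1)N$, where the $\pi^2/3$ reflects the two-sided tail sum $\sum_t 2t^{-2}$ and each failed round costs at most $d_{\max}$.

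On the good rounds I would combine optimism and smoothness. Since the algorithm selects $\mathcal{S}_t$ minimizing the surrogate and $\underline F_n\le F_n$ for all $n$, monotonicity gives surrogate$(\mathcal{S}_t)\le$ surrogate$(\mathcal{S}^*)\le\mu_{\mathcal{S}^*}/d_{\max}$; subtracting and applying the smoothness bound yields
\begin{align}
\Delta_{\mathcal{S}_t}\;\le\;\sum_{n\in\mathcal{S}_t} \bigl\|F_n-\underline F_n\bigr\|_\infty \;\le\; 2\sum_{n\in\mathcal{S}_t}\sqrt{\frac{\alpha\ln(t-t_n)}{k_{t-1,n}}}.
\end{align}
The last step is the standard combinatorial counting: charge each suboptimal selection to its least-explored arm, note that such a charge to arm $n$ can occur only while the padding radius still exceeds $\approx\Delta_n/(2K)$, i.e.\ while $k_{t-1,n}=O(\alpha K^2\ln T/\Delta_n^2)$, and sum the gap-weighted counts over a geometric partition of the gap scale. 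After rescaling by $d_{\max}$ and fixing $\alpha=\tfrac23$, this produces the claimed $C_1 K\sum_{n\in\mathcal{N}}\ln T/\Delta_n$ term.

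The hard part, and the reason the argument departs from off-the-shelf CUCB, is the nonlinearity of the $\min$ loss: it forces the learner to estimate \emph{entire delay distributions}, so I must prove the distributional bounded-smoothness lemma above and replace Hoeffding concentration of means by DKW concentration of CDFs. The secondary difficulty is the counting bookkeeping required to obtain a bound that is linear in the super-arm size $K$ (rather than the naive $K^2$) and to pin down the explicit constant $C_1=2136$; I expect this to be tedious but routine once the smoothness and confidence-width estimates are in hand.
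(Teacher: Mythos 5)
Your proposal is correct in substance, but it takes a genuinely different route from the paper: the paper's Appendix D does not re-derive the regret bound at all. It gives a three-step reduction --- (i) rewrite the delay-minimization objective as reward maximization over $1-\tilde{d}(t,n)\in[0,1]$, (ii) check that this reward function $R(\d_t,\mathcal{S}_t)=\max_{n\in\mathcal{S}_t}\bigl(1-\tilde{d}(t,n)\bigr)$ is monotone and bounded so that Assumptions 1--3 of the cited CMAB work (Chen et al., ``Combinatorial multi-armed bandit with general reward functions'') hold, and (iii) observe that under the static-arm-set assumption $\mathcal{N}_t=\mathcal{N}$ the selection rule \eqref{bestset} with \eqref{utility} is exactly the SDCB algorithm --- and then imports the bound, including $C_1=2136$ and $C_2=(\pi^2/3+1)N$, directly from Theorem 1 of that reference. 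What you have written is essentially a reconstruction of the \emph{internals} of that cited theorem: the product-of-CDFs representation $\mu_{\mathcal{S}}=d_{\max}\int_0^1\prod_{n\in\mathcal{S}}F_n(w)\,\dx w$ (valid under the independence across SeVs that both the paper and the reference assume), first-order stochastic dominance as the optimism mechanism, the telescoped bounded-smoothness bound, DKW concentration of empirical CDFs in place of Hoeffding for means, and the Kveton-style gap-peeling count that turns the naive $K^2$ into $K$ and produces the large explicit constant. Your route buys a self-contained proof that explains where $2136$ and $\pi^2/3$ come from; the paper's route buys brevity and correctness-by-citation, with the nontrivial work (including the counting bookkeeping you rightly call tedious) hidden in the reference.

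One concrete quantitative flaw: your concentration accounting does not close with $\alpha=\tfrac{2}{3}$ as stated. Setting $\varepsilon=\sqrt{\alpha\ln t/k}$ in DKW gives a per-$(t,n,k)$ failure probability $2t^{-2\alpha}$, and since $k_{t-1,n}$ is random you must union over $k\in\{1,\dots,t\}$, yielding $2t^{1-2\alpha}$ per arm per round; with $\alpha=\tfrac{2}{3}$ this is $2t^{-1/3}$, which is not summable, so neither the good-event argument nor the claimed $\sum_t 2t^{-2}=\pi^2/3$ constant goes through. You need $\alpha\geq\tfrac{3}{2}$ --- and indeed the SDCB radius in the cited reference is $\sqrt{3\ln t/(2k)}$, i.e.\ $\alpha=\tfrac{3}{2}$, which makes the tail $2t^{-2}$ after the union over $k$ and delivers exactly $C_2=(\pi^2/3+1)N$. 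The ``$\alpha=\tfrac{2}{3}$'' in the theorem statement appears to be an inverted fraction inherited from the paper itself, so this is less a conceptual error on your part than a place where reproving the result (rather than citing it, as the paper does) exposes that the stated constant cannot be taken at face value; your proof should either correct $\alpha$ to $\tfrac{3}{2}$ or note explicitly that the bound fails as stated.
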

	\begin{proof}
		See Appendix \ref{a4}.
	\end{proof}
	
	Theorem \ref{theo} indicates that, the learning regret of LTRA grows logarithmically with respect to the number of tasks $T$, and is also related to the performance gap $\Delta_n$ and the number of candidate SeVs $N$.

	\subsection{Implementation Considerations}
	In reality, the observed offloading delay $d(t,n)$ is continuous within range $(0,d_{\text{max}}]$. As the number of tasks $t$ grows, the proposed LTRA suffers from high storage cost to record all the offloading delay, as well as high computational complexity to calculate $\mathcal{S}_t$ according to \eqref{bestset}. These two phenomena violate the motivation for task offloading, i.e., the TaV has limited computing and storage resources.
	
	A feasible solution is to \emph{discretize} the empirical CDF $\hat F_n$. The discretization level is denoted by $l$, and the support of the discretized CDF is given by $\left\{0,\frac{1}{l},\frac{2}{l},...,\frac{l-1}{l}\right\}$, after partitioning range $(0,1]$ (the range of $\tilde{d}(t,n)$) into $l$ segments with equal intervals. If value $1-\tilde{d}(t,n)$ belongs to $\left[\frac{j}{l},\frac{j+1}{l}\right)$,  the empirical CDF $\tilde F_n$ is updated by value $\frac{j}{l}$.
	Discretization leads to additional learning regret, which can still be bounded according to \cite{chen2016cmab}.
	
	Another issue is that the $t$-th task may be offloaded before the TaV collects all the result feedbacks of the  previous $t-1$ tasks. In this case, a simple way is to use the up-to-date learned information to guide the offloading decisions. 
	
	\begin{remark}
		The relationship between Section \ref{sys_ana} and Section \ref{algo} is remarked here:
		The optimal number of task replicas $K^*$ can be provided in a large time-scale based on global conditions of a region. Meanwhile, LTRA works in a small time-scale, using $K^*$ as an input parameter, to guide the offloading decisions in a distributed manner.
	\end{remark}

	\section{Numerical and Simulation Results}\label{sim}
	
	In this section, we carry out simulations to validate the theoretical results and evaluate the proposed task replication algorithm. We first compare the approximation to the optimal number of task replicas obtained from Section \ref{sys_ana} with numerical and simulation results, and then simulate the proposed LTRA under a realistic traffic scenario.

	\subsection{Validation of System-Level Performance Analysis}
	
	Both numerical and simulation results are shown in this subsection, to validate the theoretical analysis in Section \ref{sys_ana}.
	Recall that the task upload delay is not related to the number of task replicas due to multicasting, thus we only focus on the task execution delay.
	
	The density of vehicles, including TaVs and SeVs, is set to $\gamma_t+\gamma_s=25$ vehicles per $\mathrm{km}$. The packet erasure probability is  $p_e=0.02$, and the communication range of each TaV is $R=200\mathrm{m}$. The service rate of each SeV is set to $\mu_c=10$. The theoretical task execution delay is calculated according to \eqref{exe_delay}, and the corresponding simulation result is obtained via Monte Carlo method, where a $10 \mathrm{km}$ single-lane road is considered with $10^6$ realizations. 
	
	\begin{figure}[!t]
		\centering	
		\subfigure[The density ratio of TaV to SeV is $\frac{\gamma_t}{\gamma_s}=\frac{1}{4}$.]{\label{sys_lambda_beta02}			
			\includegraphics[width=0.6\textwidth]{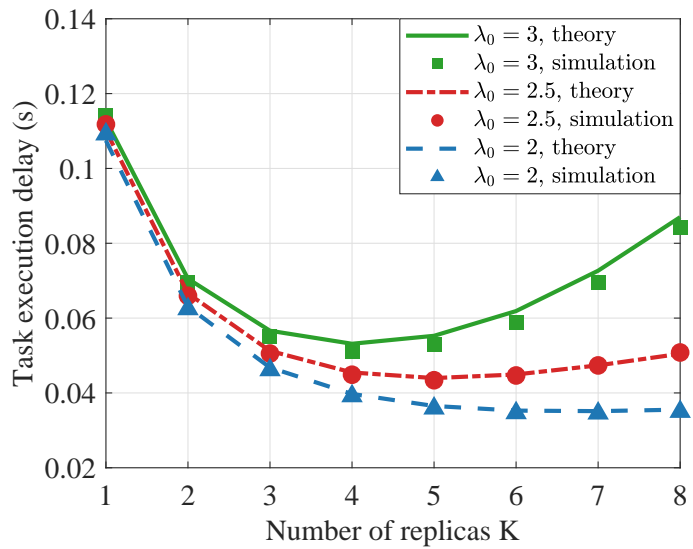}}
		\subfigure[The density ratio of TaV to SeV is $\frac{\gamma_t}{\gamma_s}=\frac{1}{3}$.]{\label{sys_lambda_beta025}	
			\includegraphics[width=0.6\textwidth]{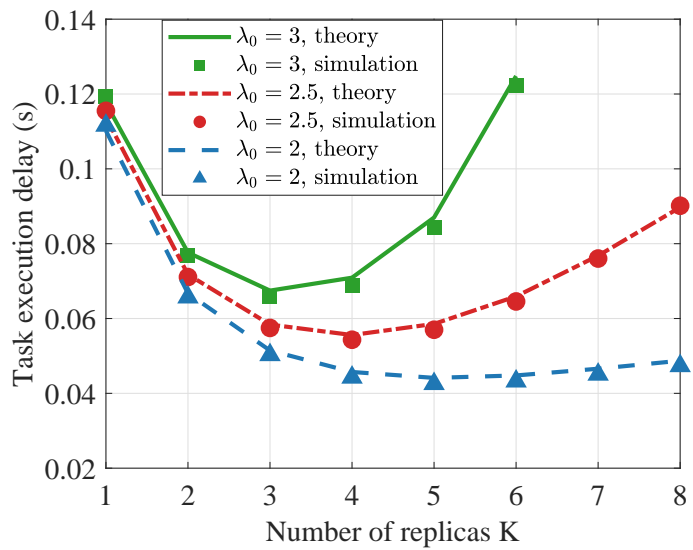}}
		\caption{Numerical and simulation results of the average task execution delay with respect to the number of task replicas under different task arrival rates $\lambda_0$.}
		\label{sys_lambda}
	\end{figure}
	
	\begin{figure}[!t]
		\centering	
		\includegraphics[width=0.6\textwidth]{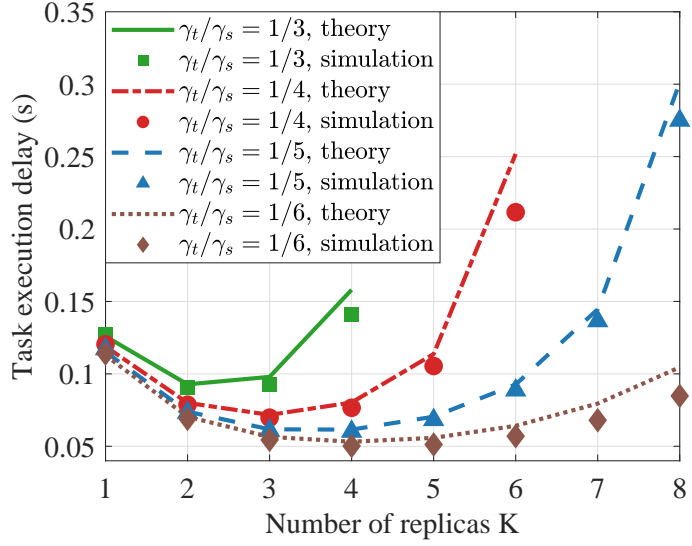}	
		\caption{Numerical and simulation results of the average task execution delay with respect to the number of task replicas under different TaV and SeV densities, with $\lambda_0=4$.}
		\label{sys_beta}
	\end{figure}

	Fig. \ref{sys_lambda} and Fig. \ref{sys_beta} show the average task execution delay with respect to the number of task replicas $K$ under different task arrival rates $\lambda_0$ and TaV to SeV density ratios $\frac{\gamma_t}{\gamma_s}$. Given $\lambda_0$ and $\frac{\gamma_t}{\gamma_s}$, the task execution delay first decreases with $K$ and then increases, and the optimal $K$ varies with the network conditions. A proper number of task replicas can improve the delay performance substantially, compared to the case without replications. For example, as shown in Fig. \ref{sys_lambda_beta025}, when $\lambda_0=4$ and $\frac{\gamma_t}{\gamma_s}=\frac{1}{3}$, task execution delay with $K=4$ replicas can be decreased by $50\%$ compared to that without replication ($K=1$).
	
	\begin{table*}[!t]
		\caption{Theoretical and simulation results of the optimal number of task replicas.}
		\label{opt_K}
		\centering
		\begin{tabular}{p{0.75cm}<{\centering}|p{0.75cm}<{\centering}|p{0.75cm}<{\centering}|p{0.75cm}<{\centering}|p{0.75cm}<{\centering}|p{0.75cm}<{\centering}||p{0.75cm}<{\centering}|p{0.75cm}<{\centering}|p{0.75cm}<{\centering}|p{0.75cm}<{\centering}|p{0.75cm}<{\centering}|p{0.75cm}<{\centering}}
			\hline
			$\lambda_0$ & $\gamma_t/\gamma_s$ & $K_{\text{theory}}^*$  & $K_{\text{sim}}^*$ & $\tilde{K}^*$ &$\tilde{K}_{\text{round}}^*$  &
			$\lambda_0$ & $\gamma_t/\gamma_s$ & $K_{\text{theory}}^*$  & $K_{\text{sim}}^*$ & $\tilde{K}^*$ &$\tilde{K}_{\text{round}}^*$\\		
			\hline
			\multirow{7}{*}{2} & 1                      &2&2&1.68&2 &
			2                     &\multirow{7}{*}{1}   &2&2&1.68&2 \\
			& 1/2                   &4&4&3.28&3 &
			2.5                 &                                    &1&1&1.34&1 \\	 
			& 1/3                   &5&5&4.65&5 &
			3                    &                                    &1&1&1.12&1 \\	    
			& 1/4                  &7&7&5.84&6 &
			3.5                 &                                    &1&1&0.96&1 \\	          
			& 1/5                   &8&8&6.89&7 &
			4                    &                                    &1&1&0.84&1 \\	    
			& 1/6                   &8&8&7.81&8 &
			4.5                 &                                    &1&1&0.74&1 \\	
			& 1/7                   &8&8&8.62&9 &
			5                    &                                    &1&1&0.67&1 \\	
			\hline     
			\multirow{7}{*}{3} & 1                      &1&1&1.12&1 &
			2                     &\multirow{7}{*}{1/3}&5&5&4.65&5 \\
			& 1/2                    &2&2&2.19&2 &
			2.5                 &                                    &4&4&3.72&4 \\	 
			& 1/3                   &3&3&3.10&3 &
			3                    &                                    &3&3&3.10&3 \\	    
			& 1/4                 &4&4&3.89&4 &
			3.5                 &                                    &3&3&2.66&3 \\	          
			& 1/5                   &5&5&4.59&5 &
			4                    &                                    &2&2&2.33&2 \\	    
			& 1/6                   &6&6&5.20&5 &
			4.5                 &                                    &2&2&2.07&2 \\	
			& 1/7                   &6&6&5.75&6 &
			5                    &                                    &2&2&1.86&2 \\	
			\hline  
			\multirow{7}{*}{4} & 1                      &1&1&0.84&1 &
			2                     &\multirow{7}{*}{1/4}&7&7&5.84&6 \\
			&1/2                    &2&2&1.64&2 &
			2.5                 &                                    &5&5&4.68&5 \\	 
			& 1/3                   &2&2&2.33&2 &
			3                    &                                    &4&4&3.89&4 \\	    
			& 1/4                 &3&3&2.92&3 &
			3.5                 &                                    &3&3&3.34&3 \\	          
			& 1/5                   &4&4&3.44&3 &
			4                    &                                    &3&3&2.92&3 \\	    
			& 1/6                   &4&4&3.90&4 &
			4.5                 &                                    &3&3&2.60&3 \\	
			& 1/7                   &4&4&4.31&4 &
			5                    &                                    &2&2&2.34&2 \\	
			\hline        
		\end{tabular}
	\end{table*}


	Table \ref{opt_K} compares the optimal number of task replicas obtained from theory and simulation. Specifically, the task execution delay is calculated according to \eqref{exe_delay} from $K=1$ to $8$, and $K_{\text{theory}}^*$ is the corresponding $K$ that minimizes \eqref{exe_delay}, which is the optimal theoretical result. $K_{\text{sim}}^*$ is obtained via Monte Carlo simulation. $\tilde{K}^*$ is calculated according to Theorem \ref{theo1}, and $\tilde{K}_{\text{round}}^*=\text{round}\left(\tilde{K}^*\right)$	is the integer nearest to $\tilde{K}^*$, which is our approximated result. Remark that, the main contribution of the analysis is to derive the near-optimal number of task replicas $\tilde{K}^*$ that minimizes the average task execution delay, as shown in Theorem \ref{theo1}. Therefore, we mainly validate the accuracy of $\tilde{K}^*$ in this part.
	
	We can see from Table \ref{opt_K} that under most cases, our near-optimal solution $\tilde{K}_{\text{round}}^*$ is exactly the same as the optimal theoretical and simulation results. Occasionally, $\tilde{K}_{\text{round}}^*$ is not the optimal solution, but it is quite close-to-optimal, with a maximum difference of 1. We remark that, the number of task replicas is always an integer, thus a difference of 1 is a very small gap. Moreover, the task execution delay achieved by $K=\tilde{K}_{\text{round}}^*$ and $K=K_{\text{theory}}^*$ are very close even if $\tilde{K}_{\text{round}}^*\neq K_{\text{theory}}^*$. For example, in Table \ref{opt_K}, when $\lambda_0=2$ and $\gamma_t/\gamma_s=\frac{1}{4}$, $K_{\text{theory}}^*=7$ and $\tilde{K}_{\text{round}}^*=6$. According to Fig. \ref{sys_lambda_beta02}, the task execution delay at $K=7$ and $K=6$ are almost the same.
	In brief, the approximation given in Theorem \ref{theo1} provides an accurate estimate of the optimal number of task replicas, which can guide the efficient task replication from the system point of view.

	\subsection{Evaluation of the Proposed Algorithm under a Realistic Traffic Scenario}
	
	\begin{figure*}[!t]
		\centering	
		\subfigure[The density ratio of TaV to SeV is $\frac{\gamma_t}{\gamma_s}=0$.]{\label{algo_conv_beta0}			
			\includegraphics[width=0.48\textwidth]{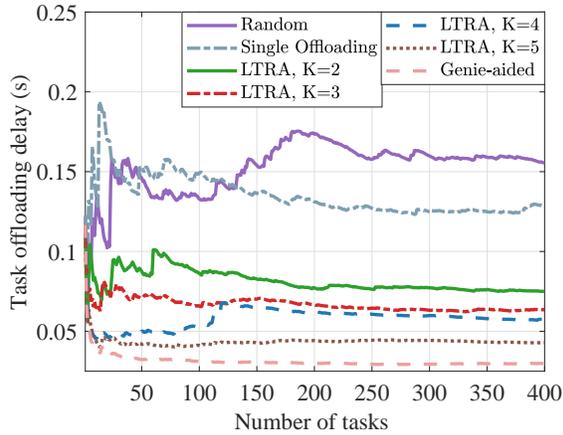}}		
		\subfigure[The density ratio of TaV to SeV is $\frac{\gamma_t}{\gamma_s}=\frac{1}{7}$.]{\label{algo_conv_beta0125}	
			\includegraphics[width=0.48\textwidth]{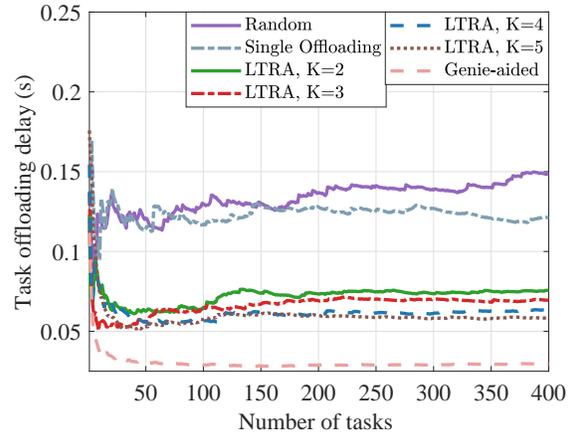}}
		\subfigure[The density ratio of TaV to SeV is $\frac{\gamma_t}{\gamma_s}=\frac{1}{4}$.]{\label{algo_conv_beta02}			
			\includegraphics[width=0.48\textwidth]{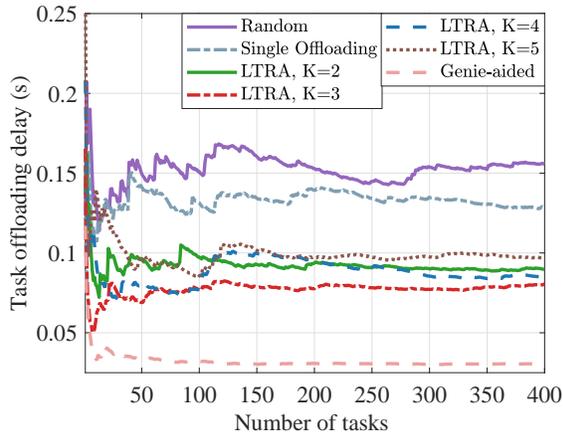}}	
		\subfigure[The density ratio of TaV to SeV is $\frac{\gamma_t}{\gamma_s}=\frac{1}{3}$.]{\label{algo_conv_beta025}			
			\includegraphics[width=0.48\textwidth]{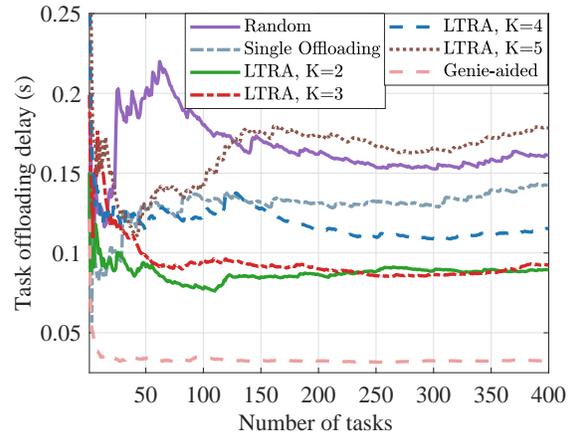}}	
		\caption{Average offloading delay of the proposed LTRA under the realistic traffic scenario, with $\lambda_0=4$.}
		\label{algo_conv}
	\end{figure*}
	
	\begin{figure*}[!htb]
		\centering	
		\subfigure[$\frac{\gamma_t}{\gamma_s}=\frac{1}{4}$, average offloading delay.]{\label{algo_delay_02}			
			\includegraphics[width=0.48\textwidth]{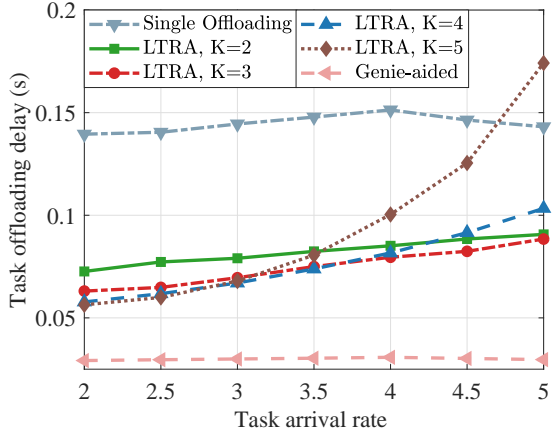}}		
		\subfigure[$\frac{\gamma_t}{\gamma_s}=\frac{1}{4}$, task completion ratio.]{\label{algo_ddl_02}	
			\includegraphics[width=0.48\textwidth]{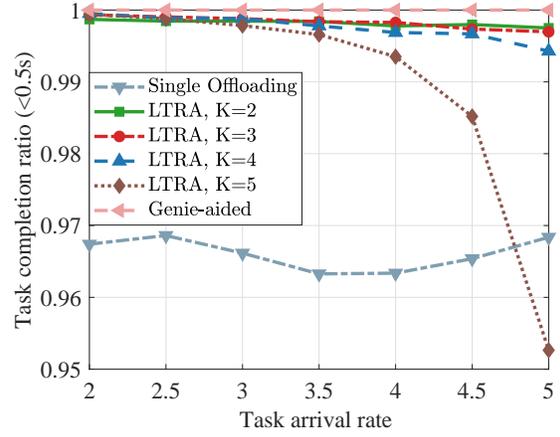}}
		\subfigure[$\frac{\gamma_t}{\gamma_s}=\frac{1}{3}$, average offloading delay]{\label{algo_delay_025}			
			\includegraphics[width=0.48\textwidth]{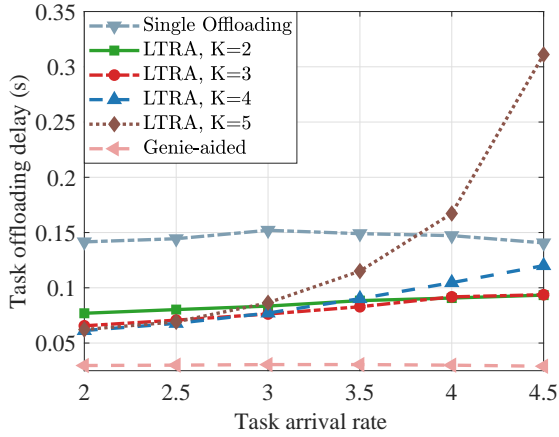}}	
		\subfigure[$\frac{\gamma_t}{\gamma_s}=\frac{1}{3}$, task completion ratio.]{\label{algo_ddl_025}			
			\includegraphics[width=0.48\textwidth]{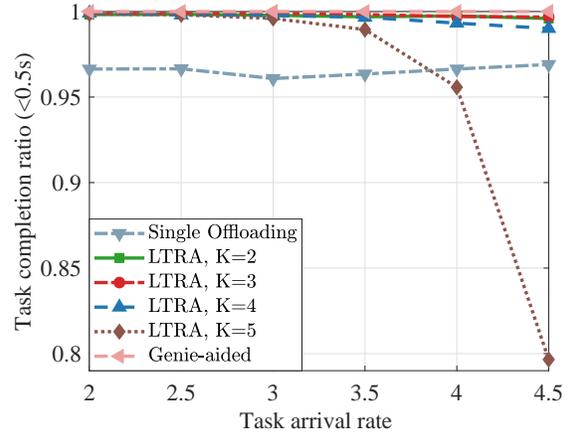}}	
		\caption{Average offloading delay and task completion ratio of the proposed LTRA under the realistic traffic scenario, with respect to task arrival rate $\lambda_0$.}
		\label{algo_delay_ddl}
	\end{figure*}
	
	To evaluate the offloading delay and reliability of the proposed task replication algorithm, we simulate a realistic traffic scenario in SUMO\footnote{http://www.sumo.dlr.de/userdoc/SUMO.html} using a $12\mathrm{km}$ segment of G6 Highway in Beijing, which is downloaded from Open Street Map (OSM)\footnote{http://www.openstreetmap.org/}. The traffic scenario is the same as that in Section VI of \cite{Sun2019TVT}. 
	The total arrival rate of vehicles in SUMO is set to $0.5$. According to the traffic flow information, the average number of vehicles per $\mathrm{km}$ is roughly $25$, which is close to the settings in the previous subsection.
	The maximum speed allowed of each TaV or SeV is $20\mathrm{m/s}$. 
	The output of SUMO is the car data, including the location, speed, angle of each vehicle at each time, which is imported to MATLAB for further simulations.
	
	For each task, the input data size is $L_i=1\mathrm{Mbits}$, and the output data size is considered to be negligible. For task upload, the path loss exponent is set to $2$, the channel bandwidth $W_u=10\mathrm{MHz}$, transmission power $P=0.5\mathrm{W}$, and noise power $N_0=10^{-13}\mathrm{W}$. The service rate $\mu_c(n)$ of each candidate SeV is uniformly distributed within $[8,12]$ tasks per second, and the packet erasure probability $p_e(t,n)$ is uniformly selected within $[0.01,0.03]$, so that the mean service rate $\mu_c=10$ and the mean packet erasure probability $p_e=0.02$ are the same as above. Moreover, parameter $\alpha$ in \eqref{utility} is set to $0.5$, the maximum delay $d_{\text{max}}=0.5\mathrm{s}$ and the discretization level $l=100$.
	
	The proposed algorithm is compared to: 1) \emph{Genie-aided policy}, where the TaV knows the exact global state information of all candidate SeVs, and offloads a single task to the SeV that can provide the minimum delay. Note that genie-aided policy cannot be realized in reality, which is used as a lower bound. 2) \emph{Random policy}, where the TaV randomly selects a single SeV for each task. 3) \emph{Single offloading policy}, which is also an MAB-based learning algorithm proposed in \cite{Sun2019TVT}, where each TaV selects a single SeV to offload each task, and learns the delay performance while offloading.
	
	Fig. \ref{algo_conv} shows the evolution of the average offloading delay with respect to the number of offloaded tasks, under different density ratios of TaV to SeV with task arrival rate $\lambda_0=4$. In Fig. \ref{algo_conv_beta0}, $\frac{\gamma_t}{\gamma_s}=0$ indicates that the target TaV can monopolize the SeVs instead of sharing with other TaVs. In this case, replication can fully exploit the diversity gain, and the more task replicas are offloaded, the lower task offloading delay can be achieved. When $K=5$, the average offloading delay is very close to that achieved by the genie-aided policy.
	As $\frac{\gamma_t}{\gamma_s}$ grows higher, i.e., more TaVs share the wireless channels and SeV computation resources, fewer number of task replicas may achieve better delay performance. For example, in Fig. \ref{algo_conv_beta025}, $K=5$ leads to serious overload, so that the delay performance is even worse than that of the random benchmark, while $K=2$ is the optimal choice. 
	Furthermore, the optimal number of task replicas obtained under the realistic traffic scenario coincides with the analytical results. 
	For example, in Fig. \ref{algo_conv_beta02}, $K=3$ is the optimal choice, which is the same as $\tilde{K}_{\text{round}}^*$  according to Table \ref{opt_K} when $\frac{\gamma_t}{\gamma_s}=\frac{1}{4}$ and $\lambda_0=4$.

	Fig. \ref{algo_delay_ddl} further plots the average offloading delay and task completion ratio under different task arrival rates and TaV to SeV density ratios. The deadline of each task is set to $d_{\text{max}}=0.5 \mathrm{s}$. Overall, task replication significantly improves the delay performance and task completion ratio compared with single task offloading.
	According to Fig. \ref{algo_delay_02} and Fig. \ref{algo_delay_025}, with a proper replication, the average offloading delay can be reduced by over $56\%$ when the task arrival rate is low ($\lambda_0=2$), and by $33\%$ when the task arrival rate is high ($\lambda_0=4.5$).
	Meanwhile, the task completion ratio can be improved from $97\%$ to over $99.6\%$ with proper number of task replicas, as shown in Fig. \ref{algo_ddl_02} and Fig. \ref{algo_ddl_025}.

	\section{Conclusions} \label{con}
	In this paper, we have investigated the task replication problem for delay minimization in the VEC system, and proposed a two-step solution to obtain realtime computing services.
	Given the general network conditions, we have approximated the optimal number of task replicas in closed-form, which is mainly related to densities of TaVs and SeVs, task arrival rates of TaVs, service capabilities of SeVs and packet erasure probability.
	Based on the analytical result, we have further designed LTRA based on CMAB theory, to enable distributed task replication in the highly dynamic vehicular environment.
	A realistic traffic scenario has been generated to evaluate the proposed task replication algorithm. Simulation results have shown that appropriate amount of task replications can improve the delay performance and task completion ratio significantly.
	Compared with single task offloading, task replication can reduce the average offloading delay by at least $30\%$, while improving the task completion ratio from $97\%$ to over $99.6\%$.
	
	Future research directions include to consider coded computation techniques \cite{Sun2019CMag,lee2018speeding,hierarchical,sun2019heterogeneous} and task cancellation principles \cite{gardner2017redundancy,joshi2017efficient}, to further improve the efficiency of resource utilization, while guaranteeing the quality of service of computation tasks.

	\appendices{}

	\section{Proof of Lemma 1} \label{a1}
	Since the number of TaVs $|\mathcal{Y}_0|$ is independent of the number of candidate SeVs $Y_i$, according to Wald's equation,
	\begin{align} \label{a1_1}
	\lambda_c&=\mathbb{E}\left[\sum_{i=1}^{|\mathcal{Y}_0|} \frac{\min\{K, Y_i\}}{Y_i} \lambda_0\right] 
	\nonumber\\
	&=\mathbb{E}[|\mathcal{Y}_0|]\mathbb{E}\left[ \frac{\min\{K, Y_1\}}{Y_1}  \right] \lambda_0
	\end{align}
	Since we are considering an SeV within the communication range of the typical TaV, it is equivalent to the case where the typical TaV is added to a PPP-distributed TaV set. According to Slivnyak's theorem \cite{stochastic_geo}, 
	\begin{align} \label{a1_2}
	\mathbb{E}[|\mathcal{Y}_0|]=\bar{\gamma}_t+1,
	\end{align}
	where $\bar{\gamma}_t=2R\gamma_t$ is the average number of TaVs within length $2R$.
	
	Let $\bar{\gamma}_s=2R\gamma_s$. The number of candidate SeVs around a TaV follows Poisson distribution with rate $\bar{\gamma}_s$.
	Therefore, 
	\begin{align} \label{a1_3}
	\mathbb{E}\left[ \frac{\min\{K, Y_1\}}{Y_1}  \right]&=\sum_{k=1}^{K} \frac{\bar{\gamma}_s^k}{k!} e^{-\bar{\gamma}_s}  + \sum_{k=K+1}^{\infty} \frac{K}{k}\frac{\bar{\gamma}_s^k}{k!} e^{-\bar{\gamma}_s}\nonumber\\ 
	&\leq K  \sum_{k=1}^{\infty} \frac{1}{k}\frac{\bar{\gamma}_s^k}{k!} e^{-\bar{\gamma}_s}.
	\end{align}
	
	Combining \eqref{a1_1}-\eqref{a1_3}, we can obtain
	\begin{align}
	\lambda_c&=\mathbb{E}[|\mathcal{Y}_0|]\mathbb{E}\left[ \frac{\min\{K, Y_1\}}{Y_1}  \right] \lambda_0\nonumber\\
	&\leq \left(\bar{\gamma}_t+1\right) \lambda_0 K  \sum_{k=1}^{\infty} \frac{1}{k}\frac{\bar{\gamma}_s^k}{k!} e^{-\bar{\gamma}_s} .
	\end{align}

	Therefore, Lemma \ref{arrival_rate} is proved.

	\section{Proof of Theorem 1} \label{a2}
	A conservative estimation of the average task execution delay, denoted by $\hat{D}_c$, is obtained by substituting the average task arrival rate $\lambda_c$ with its upper bound $\hat{\lambda}_c\triangleq\left(\bar{\gamma}_t+1\right) \lambda_0 K  \sum_{k=1}^{\infty} \frac{1}{k}\frac{\bar{\gamma}_s^k}{k!} e^{-\bar{\gamma}_s} $:
	\begin{align}
	\hat{D}_c=\sum_{N_s=1}^{\infty} \frac{\bar{\gamma}_s^{N_s}}{N_s!} e^{-\bar{\gamma}_s} 
	\sum_{k=1}^{K_s} {K_s \choose k}(1-p_e)^k p_e^{K_s-k}\frac{1}{k}\frac{1}{\mu_c-\hat{\lambda}_c}. \nonumber
	\end{align}
	To make the derivation tractable, we minimize $\hat{D}_c$ in the following, which upper bounds the expected task execution delay $D_c$. Accordingly, the optimal number of task replicas obtained can guarantee the stability of the system, i.e., if $\mu_c-\hat{\lambda}_c>0$, then $\mu_c-{\lambda}_c>0$.
	
	Let $\hat{c}=\left(\bar{\gamma}_t+1\right) \lambda_0 \sum_{k=1}^{\infty} \frac{1}{k}\frac{\bar{\gamma}_s^k}{k!} e^{-\bar{\gamma}_s}$, and thus $ \hat{\lambda}_c =\hat{c}K$. Recall that $K_s=\min\{N_s,K\}$. We have
	
	\begin{align}
	\hat{D}_c&=\sum_{N_s=1}^{\infty} \frac{\bar{\gamma}_s^{N_s}}{N_s!} e^{-\bar{\gamma}_s} 
	\sum_{k=1}^{K_s} {K_s \choose k}(1-p_e)^k p_e^{K_s-k}\frac{1}{k}\frac{1}{\mu_c-\hat{c}K} \nonumber\\
	&	=\sum_{N_s=1}^{K} \frac{\bar{\gamma}_s^{N_s}}{N_s!} e^{-\bar{\gamma}_s} 
	\sum_{k=1}^{N_s} {N_s \choose k}(1-p_e)^k p_e^{N_s-k}\frac{1}{k}\frac{1}{\mu_c-\hat{c}K} \nonumber\\
	&~~~~~~~~~~+\sum_{N_s=K+1}^{\infty} \frac{\bar{\gamma}_s^{N_s}}{N_s!} e^{-\bar{\gamma}_s} 
	\sum_{k=1}^{K} {K \choose k}(1-p_e)^k p_e^{K-k}\frac{1}{k}\frac{1}{\mu_c-\hat{c}K} \nonumber\\
	&\overset{(a)}{\approx} \frac{1}{\mu_c-\hat{c}K} \sum_{k=1}^{K} {K \choose k}(1-p_e)^k p_e^{K-k}\frac{1}{k}  \nonumber\\
	&=  \frac{1}{\mu_c-\hat{c}K} \sum_{k=1}^{K} \frac{K!}{kk!(K-k)!}(1-p_e)^k p_e^{K-k} \nonumber\\
	&=\frac{1}{\mu_s-\hat{c}K}  \sum_{k=1}^{K} \left[\frac{(K+1)!}{(k+1)!((K+1)-(k+1))!} \right.\nonumber\\  
	&~~~~~~~~~~\left.\frac{k+1}{k(K+1)(1-p_e)} (1-p_e)^{k+1} p_e^{(K+1)-(k+1)} \right] \nonumber\\
	&\overset{(b)}{\geq}\frac{ 1 }{K(\mu_s-\hat{c}K)(1-p_e)}  \sum_{k=1}^{K}{K+1 \choose k+1}(1-p_e)^{k+1} p_e^{(K+1)-(k+1)} \nonumber\\
	&\overset{(c)}{=}\frac{1- p_e^{K+1} -(K+1)(1-p_e)p_e^K  }{K(\mu_s-\hat{c}K)(1-p_e)}  \nonumber\\
	&\overset{(d)}{>}\frac{1 }{K(\mu_s-\hat{c}K)(1-p_e)}. \label{D_c_lower}
	\end{align}
	Approximation (a) is obtained by making $K$ replicas when $N_s<K$. (b) holds since $\frac{k+1}{k}\geq \frac{K+1}{K}$, for $\forall k=1,\cdots,K$.
	(c) follows the binomial expansion, and (d) is approximated due to the fact that packet erasure probability $p_e$ is close to $0$.
	
	By minimizing \eqref{D_c_lower}, we get an approximated number of task replicas that minimizes the average task execution time
	\begin{align} \label{approx_K}
	\tilde{K}^*=\frac{\mu_c}{2\hat{c}}.
	\end{align}
	
	Finally, we provide an approximation of $\hat{c}$. Observe that
	\begin{align}
	\frac{1}{k}=\frac{1}{k+1}+\frac{1}{(k+1)(k+2)}+O\left(\frac{1}{k^3}\right).
	\end{align}
	We have
	\begin{align}
	\sum_{k=1}^{\infty} \frac{1}{k}\frac{\bar{\gamma}_s^k}{k!} e^{-\bar{\gamma}_s} 
	&\approx\sum_{k=1}^{\infty}\frac{\bar{\gamma}_s^k}{(k+1)!} e^{-\bar{\gamma}_s}+\sum_{k=1}^{\infty}\frac{\bar{\gamma}_s^k}{(k+2)!} e^{-\bar{\gamma}_s} \nonumber\\
	&= \frac{1}{\bar{\gamma}_s} \sum_{k=1}^{\infty}\frac{\bar{\gamma}_s^{k+1}}{(k+1)!} e^{-\bar{\gamma}_s}+\frac{1}{\bar{\gamma}_s^2}\sum_{k=1}^{\infty}\frac{\bar{\gamma}_s^{k+2}}{(k+2)!} e^{-\bar{\gamma}_s}    \nonumber\\
	&=\frac{1}{\bar{\gamma}_s}\sum_{k=2}^{\infty}\frac{\bar{\gamma}_s^{k}}{k!} e^{-\bar{\gamma}_s}+\frac{1}{\bar{\gamma}_s^2}\sum_{k=3}^{\infty}\frac{\bar{\gamma}_s^{k}}{k!} e^{-\bar{\gamma}_s}\nonumber\\
	&=\frac{1-e^{-\bar{\gamma}_s}(1+\bar{\gamma}_s)}{\bar{\gamma}_s}+\frac{1-e^{-\bar{\gamma}_s} \left(1+\bar{\gamma}_s+\frac{\bar{\gamma}_s^2}{2}\right)}{\bar{\gamma}_s^2}\nonumber\\
	&\approx  \frac{1}{\bar{\gamma}_s}+ \frac{1}{\bar{\gamma}_s^2}.
	\end{align}
	
	Therefore, 
	\begin{align} \label{hat_c}
	\hat{c}=\left(\bar{\gamma}_t+1\right) \lambda_0 \sum_{k=1}^{\infty} \frac{1}{k}\frac{\bar{\gamma}_s^k}{k!} e^{-\bar{\gamma}_s}
	\approx\lambda_0 \left(\bar{\gamma}_t+1\right) \left( \frac{1}{\bar{\gamma}_s}+ \frac{1}{\bar{\gamma}_s^2} \right).
	\end{align}
	
	Substituting \eqref{hat_c} into \eqref{approx_K}, Theorem \ref{theo1} is proved.
	
	\section{Proof of Lemma 2} \label{a3}
	The lower bound of $P_f$ is given by $p_e^K$.
	Let $p_e^K\leq\theta_f$, we obtain a lower bound of the number of task replicas:
	\begin{align}
	K\geq \frac{\ln\theta_f }{\ln p_e}.
	\end{align}
	Since $K$ is an integer, we get Lemma \ref{fail_prob}.

	\section{Proof of Theorem 2} \label{a4}
	We prove that our delay minimization problem and the proposed LTRA is equivalent to the reward maximization problem and the stochastically dominant confidence bound (SDCB) algorithm considered in \cite{chen2016cmab}.
	
	The delay minimization problem \textbf{P2} can be transformed to a reward maximization problem:
	\begin{align} 
	&\min_{\mathcal{S}_1,...,\mathcal{S}_T} \frac{1}{T}\sum_{t=1}^{T}\min_{n \in \mathcal{S}_t}d(t,n)\nonumber\\
	=&d_{max}\min_{\mathcal{S}_1,...,\mathcal{S}_T} \frac{1}{T}\sum_{t=1}^{T}\min_{n \in \mathcal{S}_t}\tilde{d}(t,n)\nonumber\\
	\Leftrightarrow & \max_{\mathcal{S}_1,...,\mathcal{S}_T} \frac{1}{T}\sum_{t=1}^{T}\left[\max_{n \in \mathcal{S}_t}\left(1-\tilde{d}(t,n)\right)\right],
	\end{align}
	where $1-\tilde{d}(t,n)$ is considered as the reward of a base arm in \cite{chen2016cmab}, and the reward function $R(\d_t,\mathcal{S}_t)=\max_{n \in \mathcal{S}_t}\left(1-\tilde{d}(t,n)\right)\in [0,1]$. It is easy to see that the reward function is monotone with upper bound 1. Therefore, our model satisfies assumptions 1-3 in \cite{chen2016cmab}.
	
	In our proposed task replication algorithm, $\hat{F}_n$ records the empirical CDF of $1-\tilde{d}(t,n)$, and the CDF $\underline{F}_n$ is equivalent to that in the SDCB algorithm if $\mathcal{N}_t=\mathcal{N}$, for $\forall t$.
	Moreover, 
	\begin{align} 
	\mathcal{S}_t&= \argmin_{\mathcal{S} \subseteq \mathcal{N}_t ,|\mathcal{S} |=\min\{|\mathcal{N}_t|, K^*\}} \mathbb{E}_{\underline{D}}\left[\min_{n \in \mathcal{S}}d(t,n)\right] \nonumber\\
	&=\argmax_{\mathcal{S} \subseteq \mathcal{N}_t ,|\mathcal{S} |=\min\{|\mathcal{N}_t|, K^*\}}  \mathbb{E}_{\underline{D}}\left[\max_{n \in \mathcal{S}}\left(1-\tilde{d}(t,n)\right)\right] \nonumber
	\end{align}
	
	Therefore, the proposed LTRA is equivalent to SDCB algorithm when $\mathcal{N}_t=\mathcal{N}$, for $\forall t$.
	The performance bound is obtained directly from Theorem 1 in \cite{chen2016cmab}.


\end{document}